\definecolor{utsalmon}{RGB}{235, 202, 184} 
\definecolor{utmagenta}{RGB}{207, 0, 114}
\definecolor{utpurple}{RGB}{79, 45, 127}
\definecolor{utnavy}{RGB}{0, 44, 95}
\definecolor{utblue}{RGB}{0, 148, 179}
\definecolor{utlightblue}{RGB}{99, 177, 229}
\definecolor{utforest}{RGB}{0, 103, 90}
\definecolor{utgreen}{RGB}{52, 178, 53}
\definecolor{utgold}{RGB}{136, 123, 27}
\definecolor{utyellow}{RGB}{254, 209, 0}
\definecolor{utorange}{RGB}{236, 122, 8}
\definecolor{utred}{RGB}{198, 12, 48}
\definecolor{utruby}{RGB}{130, 36, 51}
\definecolor{utwarmgrey}{RGB}{81, 60, 64}
\definecolor{utgrey}{RGB}{97, 99, 101}
\definecolor{utcoolgrey}{RGB}{173, 175, 175}
\newcommand{\alg}{\texttt{\textup{SAM}}}
\newcommand{\sam}{\alg}
\def\Exp{{\mathbb{E}}}
\newcommand{\E}[1]{\mbox{$\Exp[\,#1\,]$}}
\newcommand{\opt}{\texttt{\textup{OPT}}}
\newcommand{\eps}{\varepsilon}
\newcommand{\ind}{\mbox{$\mathcal{I}$}}
\newcommand{\R}{\mathbb{R}}
\newcommand{\sw}[1]{\mbox{$\tilde{w}(#1)$}}
\newcommand{\w}{\tilde{w}}
\newcommand{\talg}{T^{\texttt{\textup{ALG}}}}
\newcommand{\tsam}{T^{\alg}}
\newcommand{\trees}{\mathbb{T}}
\newcommand{\llamb}{\boldsymbol{\lambda}}
\newcommand{\ch}[1]{\textup{\textsc{Ch}}(#1)}
\newcommand{\ove}{\overline{e}}
\begin{document}
\title{Stochastic Minimum Spanning Trees \texorpdfstring{\\}{} with a Single Sample}
\author{Ruben Hoeksma \and Gavin Speek \and Marc Uetz } 

\authorrunning{R.\ Hoeksma, G.\ Speek, and M.\ Uetz}
\institute{University of Twente, Mathematics of Operations Research\\
\email{\{r.p.hoeksma,m.uetz\}@utwente.nl\\
g.w.speek@alumnus.utwente.nl}}
\maketitle           

\begin{abstract}
We consider the minimum spanning tree problem in a setting where the edge weights are stochastic from unknown distributions, and the only available information is a single sample of each edge's weight distribution. In this setting, we analyze the expected performance of the algorithm that outputs a minimum spanning tree for the sampled weights. We compare to the optimal solution when the distributions are known. For every graph with weights that are exponentially distributed, we show that the sampling based algorithm has a performance guarantee that is equal to the size of the largest bond in the graph. Furthermore, we show that for every graph this performance guarantee is tight. The proof is based on two separate inductive arguments via edge contractions, which can be interpreted as reducing the spanning tree problem to a stochastic item selection problem. We also generalize these results to arbitrary matroids, where the performance guarantee is equal to the size of the largest co-circuit of the matroid.

\keywords{Minimum Spanning Tree \and Approximation  \and Sampling \and Matroid}
\end{abstract}

\section{Introduction, motivation, and model}
\label{sec:intro}
The paper considers the minimum spanning tree problem, where given an edge-weighted (multi)graph, one is asked to compute a spanning tree of the graph with minimum total weight. It is folklore in combinatorial optimization that there are efficient algorithms to do that \cite{Kruskal1956,Prim1957}. 
Taking inspiration from some earlier papers on modeling and analyzing minimum spanning tree problems under uncertainty~\cite{Ishii1981,Kulkarni1988,Jain1988,HutsonShier2005,DhamdhereEtAl2005,HoffmannEtAl2008,KamousiSuri2011,FockeEtAt2020}, we make the assumption that the edge weights are not known in advance, but come from unknown distributions. 
The goal is to compute a spanning tree that has minimal weight in expectation. 
Inspired by recent work on a stochastic scheduling problem~\cite{PuckUetz24}, we ask if a provably good solution can be obtained under the restrictive information regime where, per edge, only a single sample from the distribution of the edge weight is known. 
This question is also inspired by the recently increasing interest in learning augmented algorithms in combinatorial optimization \cite{alg-predictions-website}, 
combined with the fact that for some stochastic selection problems, e.g., the prophet inequality setting, a single sample is just as good as full distributional knowledge~\cite{RWW2019}. 
Moreover, from a more practical viewpoint it is not unrealistic to assume that the information being communicated is just a single parameter per edge, while in reality, the true weight is governed by an unknown distribution. 
Our contribution is to shed some light on the resulting effects.

Arguably, under the restricted information regime of a single sample, there is only one sensible algorithm, namely to ``follow the samples'' and compute a minimum spanning tree based on the sampled edge weights. 
Our main result is Theorem~\ref{main_thm_graphs}, which shows that, for exponentially distributed edge weights, the expected performance of this sampling based algorithm is bounded. Specifically, it is a $b$-approximation, where $b$ is the size of the largest bond in the given graph. 
Here, the adversary is the so called non-adaptive optimum, which is the optimal solution that has access to the full distributional information on the edge weights, yet just as the algorithm, the adversary does not know their actual realizations. Moreover, for any given graph $G$, we show that there are rate parameters for the exponential distributions of $G$'s edge weights which make this performance bound asymptotically tight.  

Regarding related work, we note that there is quite an amount of papers that address different versions for stochastic minimum spanning tree problems. We give non-exhaustive pointers to the different models that have been considered in the literature, and note that none of these earlier papers has addressed the single sample question that we address here. The models that have been studied are: 
chance constrained optimization for normally distributed edge weights~\cite{Ishii1981}; 
computing an a-priori spanning tree in a regime where only subsets of vertices need to be visited with certain probabilities~\cite{BertsimasEtAl1990};
computing a robust spanning tree for edge weights in intervals~\cite{AronvH2004}; 
two stage models to minimize total expected costs over the two stages~\cite{DhamdhereEtAl2005}; 
analyzing the distribution function of pointwise optimal minimum spanning trees with stochastic weights~\cite{Kulkarni1988,Jain1988,HutsonShier2005}; 
analyzing models where uncertain edge weights can be queried at a cost, and with the goal to minimize the amount or total cost of queries~\cite{HoffmannEtAl2008,FockeEtAt2020}; 
computing the expected weight of spanning trees when edges may or may not be present with independent probabilities \cite{KamousiSuri2011}, or the PAC learnability of minimum spanning trees \cite{EberleEtAl2022}.

The organization of this paper is as follows. In Section~\ref{sec:notation}, we introduce the used notation, and we give a few preparatory technical lemmas on contractions and bonds in Section~\ref{sec:tech_lemmas}. 
Section~\ref{sec:model} introduces the single sample model for stochastic minimum spanning trees.  Our main result for exponentially distributed edge weights follows in Section~\ref{sec:main_results}. Finally, Section~\ref{sec:main_matroids} generalizes the results to matroids.

\section{Notation: Graphs, bonds, minimum spanning trees}
\label{sec:notation}
We consider loopless (multi)graphs $G = (V,E)$ with vertex set $V$ and edge set $E = (e_1, e_2,\dots,e_m)$. Thus, $|E| = m$. A \textit{cut set} in $G$ is a subset of edges whose removal increases the number of connected components of $G$. 
An inclusion-wise minimal cut set is a \textit{bond}; see Fig.~\ref{fig_bond}. 
Note that the removal of a bond of the graph increases the number of connected components by exactly one. 

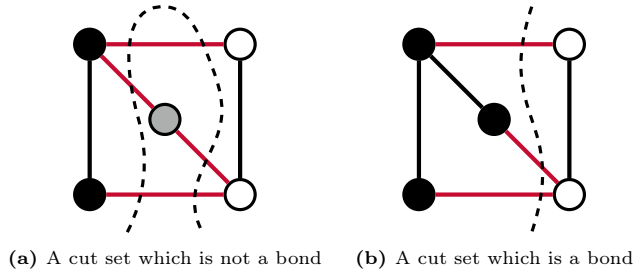
\begin{figure}[ht]
    \centering
    \begin{subfigure}{0.35\textwidth}
	\centering
        \begin{tikzpicture}[node distance=1cm, label distance=-.1cm, very thick] 
        \tikzstyle{vertex} = [circle, minimum size=4mm, inner sep=0mm,draw]
    
        \coordinate (c00) at (0,0);
        \coordinate (c20) at ($(c00)+(2,0)$);
        \coordinate (c02) at ($(c00)+(0,2)$);
        \coordinate (c22) at ($(c00)+(2,2)$);
        \coordinate (c11) at ($(c00)!1/2!(c22)$);
    
        \node[vertex, fill=black] (v02) 
        at (c02) {}; 
        \node[vertex, fill=utcoolgrey] (v11) 
        at (c11) {};
        \node[vertex, fill=black] (v00) 
        at (c00) {};
        \node[vertex, fill=white] (v22) 
        at (c22) {};
        \node[vertex, fill=white] (v20) 
        at (c20) {};
    
        \path[-, ultra thick] 
        (v02) edge[utred] (v22)
        (v02) edge (v00)
        (v02) edge[utred] (v11)
        (v20) edge (v22)
        (v20) edge[utred] (v00)
        (v20) edge[utred] (v11)
        ;
    
        \draw[dashed] (.5,-.5) to[curve through =
        { (.65,1) (1,2.5) (1.55,1.9) (1.75,1) (1.4,0)}] (1.5,-.5);
    
        \end{tikzpicture}
	\caption{A cut set which is not a bond}
    \end{subfigure}
    \begin{subfigure}{0.35\textwidth}
    \centering  
        \begin{tikzpicture}[node distance=1cm, label distance=-.1cm, very thick] 
        \tikzstyle{vertex} = [circle, minimum size=4mm, inner sep=0mm,draw]
        
        \coordinate (c00) at (0,0);
        \coordinate (c20) at ($(c00)+(2,0)$);
        \coordinate (c02) at ($(c00)+(0,2)$);
        \coordinate (c22) at ($(c00)+(2,2)$);
        \coordinate (c11) at ($(c00)!1/2!(c22)$);
        
        \node[vertex, fill=black] (v02) 
        at (c02) {}; 
        \node[vertex, fill=black] (v11) 
        at (c11) {};
        \node[vertex, fill=black] (v00) 
        at (c00) {};
        \node[vertex, fill=white] (v22) 
        at (c22) {};
        \node[vertex, fill=white] (v20) 
        at (c20) {};
        
        \path[-, ultra thick] 
        (v02) edge[utred] (v22)
        (v02) edge (v00)
        (v02) edge (v11)
        (v20) edge (v22)
        (v20) edge[utred] (v00)
        (v20) edge[utred] (v11)
        ;
        
        \draw[dashed] (1.5,2.5) .. controls (1,1) and (2,1) .. (1.5,-.5);
        
        \end{tikzpicture}
    \caption{A cut set which is a bond}
    \end{subfigure}
    \caption{Illustration of cut sets and bonds. 
    Red edges indicate the cut sets. Differently colored vertices are different components 
    after removing the cut set.}
    \label{fig_bond}
\end{figure}
Denote by $B(G)$ a maximum cardinality bond in graph $G$, and note that it need not be unique. 
A (simple) path of length $k$ in $G$ is a set of distinct edges $\{u_i,v_i\}\in E$ so that $v_i=u_{i+1}$ for all $i=1,\dots,k-1$. A cycle of length $k$ in $G$ is a set of $k$ distinct edges $\{u_i,v_i\}\in E$ so that $v_i=u_{i+1}$ for all $i=1,\dots,k$ (mod $k$). A spanning tree $T\subseteq E$ of $G$ is a set of $|V|-1$ edges that does not contain a cycle. Denote by $\trees$ the set of all spanning trees of $G$. If the graph $G$ has edge weights $w:E\to\R$, a minimum spanning tree (MST) is a spanning tree $T\in\trees$ minimizing $w(T)=\sum_{e\in T} w(e)$; it can be computed efficiently, e.g., using Kruskal's (greedy) algorithm~\cite{Kruskal1956} or Prim's algorithm~\cite{Prim1957}.

One operation which will be crucial later is that of an \emph{edge contraction}. If we contract edge $e=\{u,v\}$, vertices $u$ and $v$ are merged into one new vertex $w$, while edge $e$ is removed.  Fig.~\ref{fig:cont} illustrates this.  If the contracted edge $\{u,v\}$ has parallel edges, we delete the resulting loops $\{w,w\}$ incident to $w$, because they are irrelevant the context of spanning trees.
By $G/e$ we denote the graph which is obtained by contracting edge $e$ from $G$. Moreover, by $G\setminus e$, we denote the graph obtained from $G$ by deleting edge $e$.

\begin{figure}[htb]
    \centering
        \begin{tikzpicture}[node distance=1cm] 
        \tikzstyle{vertex} = [circle, minimum size=2.5mm, inner sep=0mm]
    
        \coordinate (cv) at (0,0);
        \coordinate (cu) at ($(cv)+(0,1.5)$);
        \coordinate (cw) at ($(cv)!1/2!(cu)$);
        \coordinate (cu+v) at ($(cu)+(-1,-.5)$);
        \coordinate (cu1) at ($(cu)+(-.7,.5)$);
        \coordinate (cu2) at ($(cu)+(.7,.6)$);
        \coordinate (cu3) at ($(cu)+(0.75,-.4)$);
        \coordinate (cv1) at ($(cv)+(-.7,-.4)$);
        \coordinate (cv2) at ($(cv)+(0.7,-.4)$);
        \coordinate (cv3) at ($(cv)+(0.75,.4)$);
    
        \node[vertex, label=west:{\scriptsize $v$}, fill=utblue] (v) 
        at (cv) {}; 
        \node[vertex, label=west:{\scriptsize $u$}, fill=utblue] (u) 
        at (cu) {};
        \node[vertex, fill=black] (u+v) 
        at (cu+v) {};
        \node[vertex, fill=black] (v1) 
        at (cv1) {};
        \node[vertex, fill=black] (v2) 
        at (cv2) {};
        \node[vertex, fill=black] (v3) 
        at (cv3) {};
        \node[vertex, fill=black] (u1) 
        at (cu1) {};
        \node[vertex, fill=black] (u2) 
        at (cu2) {};
        \node[vertex, fill=black] (u3) 
        at (cu3) {};
    
        \path[-,very thick] 
        (u) edge[utblue] (v)
        (v) edge (v1)
        (v) edge (v2)
        (v) edge (v3)
        (v) edge (u+v)
        (u) edge (u1)
        (u) edge (u2)
        (u) edge (u3)
        (u) edge (u+v)
        (v2) edge (v1)
        (v2) edge (v3)
        (u1) edge (u2)
        (u1) edge (u+v)
        ;

        \draw[-stealth, ultra thick, utred] ($(cw)+(1,0)$) -- ($(cw)+(2,0)$);

        \coordinate (cv2) at (3.25,0);
        \coordinate (cu2) at ($(cv2)+(0,1.5)$);
        \coordinate (cw2) at ($(cv2)!1/2!(cu2)$);
        \coordinate (cu+v2) at ($(cu2)+(-1,-.5)$);
        \coordinate (cu12) at ($(cu2)+(-.7,.5)$);
        \coordinate (cu22) at ($(cu2)+(.7,.6)$);
        \coordinate (cu32) at ($(cu2)+(0.75,-.4)$);
        \coordinate (cv12) at ($(cv2)+(-.7,-.4)$);
        \coordinate (cv22) at ($(cv2)+(0.7,-.4)$);
        \coordinate (cv32) at ($(cv2)+(0.75,.4)$);
    
        \node[vertex, label=north:{\scriptsize $w$}, fill=utblue] (w2) 
        at (cw2) {};
        \node[vertex, fill=black] (u+v2) 
        at (cu+v2) {};
        \node[vertex, fill=black] (v12) 
        at (cv12) {};
        \node[vertex, fill=black] (v22) 
        at (cv22) {};
        \node[vertex, fill=black] (v32) 
        at (cv32) {};
        \node[vertex, fill=black] (u12) 
        at (cu12) {};
        \node[vertex, fill=black] (u22) 
        at (cu22) {};
        \node[vertex, fill=black] (u32) 
        at (cu32) {};
    
        \path[-,very thick] 
        (w2) edge (v12)
        (w2) edge (v22)
        (w2) edge (v32)
        (w2) edge[bend left] (u+v2)
        (w2) edge (u12)
        (w2) edge (u22)
        (w2) edge (u32)
        (w2) edge[bend right] (u+v2)
        (v22) edge (v12)
        (v22) edge (v32)
        (u12) edge (u22)
        (u12) edge (u+v2)
        ;
        \end{tikzpicture}
    \caption{The blue edge is contracted, combining vertices $u$ and $v$ into a single vertex $w$ and creating two parallel edges. }
    \label{fig:cont}
\end{figure}
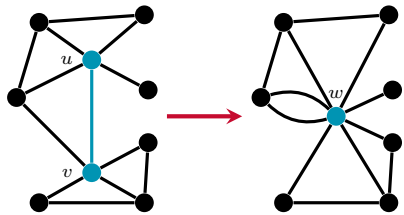

\section{Preparatory Results}
\label{sec:tech_lemmas}
We here collect a few basic observations about contractions, bonds and minimum spanning trees that we use in later proofs. The reader may choose to skip this section for now and continue with Section~\ref{sec:model}.

\begin{lemma}
	\label{lemma_bond_decrease}
	If $G=(V,E)$ is a connected (multi)graph, then for all 
 edges $e\in{}E$, we have $|B(G/e)|\leq{}|B(G)|$.
\end{lemma}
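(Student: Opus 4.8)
The plan is to show that every bond of $G/e$ is also a bond of $G$, with the same edge set and hence the same cardinality; the lemma then follows immediately, since the maximum-cardinality bond $B(G/e)$ of $G/e$ would in particular be a bond of $G$, and therefore $|B(G/e)| \le |B(G)|$.

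First I would record the standard description of bonds in a connected graph. Writing $\delta_G(S)$ for the set of edges of $G$ with exactly one endpoint in $S$, a set $C$ is a bond of $G$ if and only if $C = \delta_G(S)$ for some $\emptyset \ne S \subsetneq V$ such that both induced subgraphs $G[S]$ and $G[V\setminus S]$ are connected. The ``if'' direction is clear: removing $\delta_G(S)$ leaves exactly the two components $G[S]$ and $G[V\setminus S]$, and reinserting any single edge of $\delta_G(S)$ reconnects them, so the cut is minimal. The ``only if'' direction uses the fact noted just above Fig.~\ref{fig_bond}, that removing a bond from a connected graph yields exactly two components, which must then be these two induced subgraphs.

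Next, let $e = \{u,v\}$ and let $w$ be the vertex of $G/e$ obtained from merging $u$ and $v$. Take a bond $C$ of $G/e$ and write $C = \delta_{G/e}(S')$ with both $(G/e)[S']$ and $(G/e)[V(G/e)\setminus S']$ connected; this description applies because $G$ connected implies $G/e$ connected. Assume $w \in S'$ (otherwise swap the two sides), and lift $S'$ to $S := (S'\setminus\{w\}) \cup \{u,v\}$ in $G$. The key observations are: (i) since $u$ and $v$ both lie in $S$, the contracted edge $e$ is not in $\delta_G(S)$, and the edges of $G$ crossing $(S, V\setminus S)$ are in natural bijection with the edges of $G/e$ crossing $(S', V(G/e)\setminus S')$, so $\delta_G(S) = C$ as edge sets; (ii) $G[V\setminus S] = (G/e)[V(G/e)\setminus S']$ is connected; and (iii) $G[S]$ is connected, because $(G/e)[S']$ is connected and un-contracting $w$ back into $u$ and $v$ only re-inserts the edge $e$ joining them. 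By the characterization, $C = \delta_G(S)$ is therefore a bond of $G$ of the same cardinality.

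I expect the main obstacle to be the careful verification of (iii): one must argue that a path in $(G/e)[S']$ passing through $w$ lifts to a walk in $G[S]$ between the corresponding endpoints, routing through $u$, $v$, and the edge $e$ as needed, so that connectivity is genuinely preserved under the un-contraction. The bijection in (i), together with the handling of parallel edges and the loops deleted by the contraction, also needs a sentence of care but is routine. Finally, I would note the degenerate cases (for instance $G/e$ reducing to a single vertex, which has no bonds) make both sides of the inequality trivial.
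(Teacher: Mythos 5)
Your proof is correct, and it takes a genuinely different route from the paper's. You establish the stronger structural fact that every bond of $G/e$ is literally a bond of $G$ with the same edge set, by invoking the characterization of bonds in a connected multigraph as the cuts $\delta_G(S)$ whose two shores induce connected subgraphs, and then lifting the shore $S'$ containing the merged vertex $w$ to $S=(S'\setminus\{w\})\cup\{u,v\}$. Your verifications (i)--(iii) all go through: the contracted edge $e$ and its parallel copies have both endpoints in $S$ and hence do not cross the cut, while every other edge of $G$ survives the contraction and crosses $(S,V\setminus S)$ in $G$ exactly when it crosses $(S',V(G/e)\setminus S')$ in $G/e$; the far shore is untouched; and un-contracting $w$ preserves connectivity of the near shore because the edge $e$ rejoins $u$ and $v$. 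The paper instead argues by contradiction: it posits a bond $B'$ of $G/e$ larger than a maximum bond of $G$, notes that $B'$ then cannot be a bond of $G$, and splits into the cases that $B'$ is not a cut set of $G$ (handled by passing to $B'\cup\{e\}$) or not a minimal one (handled by observing a smaller cut set would survive the contraction). Your approach is a bit longer to write out but is constructive and self-contained, and it sidesteps the paper's somewhat delicate assertion that the cut set $B'\cup\{e\}$ is itself a bond; the paper's argument is shorter and stays closer to the raw definitions of cut set and minimality. The degenerate case you flag ($G/e$ a single vertex, hence bond-free) is harmless under the convention that the maximum bond size is then zero.
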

\begin{proof}
The proof is by contradiction.
Assume bond $B'$ in $G/e$, some maximum cardinality bond $B$ in $G$, and $|B'| > |B|$. As $B'$ cannot be a bond in $G$, it is either not a cut set, or not a minimal cut set in $G$. 
First, suppose $B'$ is not a cut set in $G$. But since $B' \cup{} \{e\}$ is a cut set in $G$, $B' \cup{} \{e\}$ must be a bond in $G$, and $|B|< |B' \cup \{e\}|$, a contradiction. Next consider the case that $B'$ is not a minimal cut set in $G$, i.e., it contains another cut set $C\subset B'$. Then $C$ is still a cut set in $G/e$, so $B'$ cannot be a bond in $G/e$, a contradiction. \qed
\end{proof}
\begin{lemma}
	\label{lemma_contract_bond}
	If $G=(V,E)$ is a connected (multi)graph with $|V|>2$, there exists some edge $e\in{}E$  so that $B(G) = B(G/e)$.
\end{lemma}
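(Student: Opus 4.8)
The plan is to exploit the correspondence between bonds and vertex bipartitions. Since $G$ is connected and removing a bond increases the number of components by exactly one, every bond of $G$ is the set of edges crossing a partition $V = S \cup (V\setminus S)$ into two nonempty parts with both $G[S]$ and $G[V\setminus S]$ connected, and conversely every such partition yields a bond. I would fix a maximum cardinality bond $B = B(G)$ with associated partition $(S, V\setminus S)$, and then contract an edge $e$ that lies entirely inside one of the two sides.

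Such an internal edge exists because $|V| > 2$: the two parts are nonempty and together have more than two vertices, so one of them, say $S$, satisfies $|S| \geq 2$. As $G[S]$ is connected on at least two vertices it contains an edge $e$, and by construction $e \notin B$. I would contract this $e$. Since $e$ and every edge parallel to it lie inside $S$, none of them is in $B$; the contraction removes only edges internal to $S$ (the edge $e$ itself and the loops parallel to it), so $B$ persists as an edge set of $G/e$.

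It then remains to verify that $B$ is still a bond of $G/e$. Contracting an edge of the connected subgraph $G[S]$ keeps it connected, so in $G/e$ the partition $(S/e,\, V\setminus S)$ again has both sides connected, and the set of crossing edges is exactly $B$ (two crossing edges may become parallel under the contraction, but in a multigraph these are still two distinct non-loop edges, so $|B|$ is unchanged). Hence $B$ is a bond of $G/e$ with $|B| = |B(G)|$, giving $|B(G/e)| \geq |B(G)|$; together with Lemma~\ref{lemma_bond_decrease}, which provides the reverse inequality, this yields $|B(G/e)| = |B(G)|$, so $B(G) = B(G/e)$.

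The step I expect to require the most care is the bond--bipartition correspondence in the multigraph setting, together with the bookkeeping around contraction: checking that the deleted loops are precisely the edges parallel to $e$ (all internal to $S$), that connectivity of both sides is preserved, and that crossing edges made parallel are still counted so that $|B|$ does not drop. The direction ``$B$ a bond $\Rightarrow$ a bipartition into two connected parts'' is the one place where minimality of $B$ and the stated exactly-two-components fact are actually needed.
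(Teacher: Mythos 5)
Your proof is correct and follows essentially the same route as the paper's: pick an edge $e\notin B(G)$ lying inside one of the two connected parts determined by a maximum bond (such an edge exists because $|V|>2$), contract it, and observe that the bond survives as a bond of $G/e$. The extra bookkeeping you supply --- the loops being internal to $S$, crossing edges that become parallel, and the appeal to Lemma~\ref{lemma_bond_decrease} to rule out a strictly larger bond in $G/e$ --- only makes explicit what the paper's terser argument leaves implicit.
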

\begin{proof}
	By the definition of a bond, the removal of the edges of the bond increases the number of connected components in the graph by exactly one. Since we assumed $|V|>2$, at least one of the two connected components contains at least 2 vertices. Since $G$ was connected, this connected component contains an edge. Thus, there exists an edge $e\not\in B$, and $B$ remains a bond in $G/e$.\qed
\end{proof}

\begin{lemma}
	\label{lemma_edge_bound}
	Let $G=(V,E)$ be a connected (multi)graph without loops. Furthermore, let $T$ be a spanning tree in $G$, and let $B(G)$ be a largest bond in $G$. Then 
	\begin{equation}
		|E| \leq{} |B(G)|\cdot|T|\, .
	\end{equation}
\end{lemma}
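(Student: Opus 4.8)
The plan is to realize $E$ as a union of exactly $|T|$ bonds, each of cardinality at most $|B(G)|$, and then finish with a union bound. The bonds I would use are the \emph{fundamental cuts} of the spanning tree $T$. Concretely, for each tree edge $t\in T$, deleting $t$ splits $T$ into two subtrees spanning a partition of the vertices into disjoint sets $S_t$ and $V\setminus S_t$. Let $C_t\subseteq E$ be the set of all edges of $G$ having exactly one endpoint in $S_t$, i.e. the cut induced by $S_t$.

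First I would check that each $C_t$ is genuinely a bond, and not merely a cut set. Since the two subtrees of $T$ are connected and span $S_t$ and $V\setminus S_t$ respectively, the induced subgraphs $G[S_t]$ and $G[V\setminus S_t]$ are both connected. Hence removing $C_t$ from $G$ leaves exactly two connected components, so $C_t$ is an inclusion-wise minimal cut set, that is, a bond. Because $B(G)$ is a maximum cardinality bond, this gives $|C_t|\le|B(G)|$ for every $t\in T$. This step is essential: if $C_t$ were only known to be a cut set, its size could exceed $|B(G)|$, and the argument would break.

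Second, I would show that the family $\{C_t : t\in T\}$ covers all of $E$, i.e. $\bigcup_{t\in T} C_t = E$. Fix any edge $e=\{u,v\}\in E$ and let $P_e\subseteq T$ be the unique path in $T$ joining $u$ and $v$ (with $P_e=\{e\}$ in the case $e\in T$). Choosing any tree edge $t$ lying on $P_e$, removing $t$ separates $u$ from $v$ within $T$, so $u$ and $v$ fall into different parts of the partition $(S_t,V\setminus S_t)$; hence $e\in C_t$. As every edge of $G$ connects two vertices and therefore lies on some such tree path, every edge belongs to at least one $C_t$.

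Combining the two facts yields the inequality directly: $|E| = \bigl|\bigcup_{t\in T} C_t\bigr| \le \sum_{t\in T}|C_t| \le |T|\cdot|B(G)|$, as claimed. The routine part is the final counting; the substantive content, and the step I expect to require the most care, is establishing that the fundamental cuts are bonds (so that their sizes are controlled by $|B(G)|$) together with verifying that they cover $E$.
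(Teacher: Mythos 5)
Your proof is correct, but it takes a different route from the paper. The paper covers $E$ by the cuts that arise while Prim's algorithm grows a tree from a single vertex: after each step the current vertex set $V_1$ induces a cut $B_i$, these cuts are claimed to be bonds of size at most $|B(G)|$, and correctness of Prim's algorithm gives $E=B_0\cup B_1\cup\cdots$. You instead cover $E$ by the $|T|$ fundamental cuts $C_t$ of the fixed spanning tree $T$; both decompositions have $|T|$ members and both are finished by the same union bound. Your version buys something concrete: for a fundamental cut, \emph{both} sides of the vertex partition are spanned by connected subtrees of $T$, so each $C_t$ is genuinely a bond and $|C_t|\le|B(G)|$ is immediate. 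In the Prim decomposition only the grown side $V_1$ induces a connected subgraph, and the cut $\delta(V_1)$ need not be a bond --- in the star $K_{1,3}$ grown from its center, the initial cut has size $3$ while every bond of that graph has size $1$ --- so the step $|B_i|\le|B(G)|$ in the paper's proof requires more care than it is given there. Your argument sidesteps this issue entirely, and the step you flagged as the one needing the most care (that the fundamental cuts are bonds) is exactly the right place to concentrate the effort.
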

\begin{proof}
The proof follows from the correctness of Prim's algorithm to compute minimum spanning trees \cite{Prim1957}. Prim's algorithm is initialized by partitioning $G$ into  $V_1 = \{v\}$ and $V_2 = V\setminus \{v\}$. Furthermore, an empty tree $T$ is initialized. Note that the edges which have one end-point in $V_1$ and one end-point in $V_2$ form a bond, say $B_0$. After initialization, an edge $e = \{u,v\}$ with minimal weight among the edges of $B_0$ is added to $T$, vertex $v$ is removed from $V_2$ and added to $V_1$. After this, $V_1$ and $V_2$ induce a new bond $B_1$, and this process is repeated until $V_2$ is empty.
Note that  $|B_i|\le |B(G)|$ in each iteration. Moreover, by correctness of Prim's algorithm we must have $E=B_0\cup B_2\cup\cdots\cup B_{|T|}$.\qed
\end{proof}

\begin{lemma}
\label{lemma_circuit_same_weight}
    Suppose there are two distinct MSTs $T_1$ and $T_2$ in $G$. Furthermore, for edge $e = \{u,v\}$ which is not contained in $T_1$ neither in $T_2$, let $f_i$ be a maximum weight edge on $T_i$'s $\{u,v\}$-path, for $i\in{}\{1,2\}$. Then  $w(f_1) = w(f_2)$. 
\end{lemma}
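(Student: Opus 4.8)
The plan is to use a single minimum spanning tree exchange argument together with a symmetry observation, so that it suffices to establish one of the two inequalities $w(f_1)\le w(f_2)$ and $w(f_2)\le w(f_1)$; the other then follows by interchanging the roles of $T_1$ and $T_2$.

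For $i\in\{1,2\}$, let $P_i$ denote the (unique) $\{u,v\}$-path in $T_i$, so that $f_i$ is a maximum weight edge of $P_i$. I would first fix attention on $f_1=\{a,b\}$. Since $f_1$ lies on the $\{u,v\}$-path $P_1$, deleting it from $T_1$ splits $T_1$ into exactly two components; let $S$ be the vertex set of the component containing $u$, so that $v\notin S$. In particular $f_1$ is an edge of $T_1$ crossing the cut $(S,V\setminus S)$, and by construction $u\in S$ while $v\in V\setminus S$.

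Next I would transport this cut to $T_2$. Because $u\in S$ and $v\in V\setminus S$, the path $P_2$ from $u$ to $v$ in $T_2$ must cross the cut, so it contains at least one edge $g$ with one endpoint in $S$ and one in $V\setminus S$. The key exchange step is that $(T_1\setminus\{f_1\})\cup\{g\}$ is again a spanning tree: removing $f_1$ leaves the two components $S$ and $V\setminus S$, and $g$ reconnects them into a connected, acyclic edge set of cardinality $|V|-1$. Since $T_1$ is an MST, this exchanged tree cannot be cheaper, which forces $w(g)\ge w(f_1)$. On the other hand, $g$ lies on $P_2$ and $f_2$ is a maximum weight edge of $P_2$, hence $w(g)\le w(f_2)$. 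Chaining the two gives $w(f_1)\le w(g)\le w(f_2)$.

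Finally, running the identical argument with the roles of $T_1$ and $T_2$ swapped (delete $f_2$ from $T_2$, then cross the resulting cut by an edge of $P_1$) yields $w(f_2)\le w(f_1)$, and the two inequalities together prove $w(f_1)=w(f_2)$. I do not anticipate a serious obstacle: the only points that require care are verifying that deleting $f_1$ genuinely separates $u$ from $v$ (which holds precisely because $f_1\in P_1$), and confirming both that the crossing edge $g$ exists and that the exchange restores a spanning tree. The hypothesis $e\notin T_1,T_2$ only serves to fix $e$ as a non-tree edge in both trees, matching the fundamental-cycle picture; notably, the argument above never invokes $e$ itself, relying solely on the cut induced by $f_1$ and the path $P_2$.
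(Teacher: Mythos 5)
Your proof is correct, but it takes a genuinely different route from the paper's. You argue via a cut--cycle exchange: deleting $f_1$ from $T_1$ induces a cut separating $u$ from $v$, the path $P_2$ must cross that cut in some edge $g$, and the exchange $(T_1\setminus\{f_1\})\cup\{g\}$ together with minimality of $T_1$ gives $w(f_1)\le w(g)\le w(f_2)$, with the reverse inequality by symmetry. (The only case worth a half-sentence is $g=f_1$, where the exchanged tree is $T_1$ itself and the inequality is trivial.) The paper instead perturbs the weight function, setting $\overline{w}(e)=\min\{w(f_1),w(f_2)\}$ while leaving all other weights unchanged, observes that $T_1$ remains $\overline{w}$-minimal and hence so does $T_2$ (since $\overline{w}(T_1)=\overline{w}(T_2)$), and extracts $\overline{w}(e)\ge\overline{w}(f_2)$ from the optimality of $T_2$. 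Your argument is more elementary and self-contained --- it uses only the basic spanning-tree exchange property and, as you note, never touches the edge $e$, so it actually proves the statement for an arbitrary vertex pair $\{u,v\}$. The paper's weight-perturbation argument, on the other hand, is phrased entirely in terms of fundamental circuits and single-element exchanges, which is why it transfers essentially verbatim to the matroid generalization in the appendix; your cut-based picture would there require dualizing to cocircuits and invoking the circuit--cocircuit intersection property, which is more machinery. Both are valid; which one is preferable depends on whether the matroid extension is in view.
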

\begin{proof}
    Assume without loss of generality that $w(f_1) \leq{} w(f_2)$. 
    We introduce new weights $\overline{w}$, which we let $\overline{w}(f) = w(f)$\, $\forall f\in{}E\setminus e$ and we let $\overline{w}(e) = \min\{w(f_1),w(f_2)\}$. Since we assumed $w(f_1) \leq{} w(f_2)$, $\overline{w}(e) = w(f_1)$. Then $T_1$ is $\overline{w}$-minimal since exchanging $f_1$ with $e$ would not decrease the weight. If any other edge exchange would decrease the weight, this would be a contradiction to the assumption that $T_1$ is $w$-minimal.
    Since we have $\overline{w}(T_1) = \overline{w}(T_2)$, we know that also $T_2$ is $\overline{w}$-minimal. Therefore, there is no edge we can exchange to obtain a $\overline{w}$-cheaper tree, which implies $\overline{w}(e) \geq{} \overline{w}(f_2)$. Since $\overline{w}(e) = \overline{w}(f_1)$, we get $w(f_1)=\overline{w}(f_1)=\overline{w}(e)\geq{}\overline{w}(f_2)=w(f_2)$.\qed
\end{proof}

\section{Single sample minimum spanning tree problem}
\label{sec:model}
From this point onward, we assume that the edge weights $w:E\to\R$ are stochastic. That is, the weights are governed by a random variable $W=(W(e))_{e\in E}$, and the edge weight of edge $e$ is realized as $w(e)\sim W(e)$. We assume that all $W(e)$ are independent. In the single sample setting, the algorithm does not know $W(e)$, but is given a single sample $\sw{e}\sim W(e)$ for each edge $e\in E$. Based on a given vector of sampled weights $\sw{e_1},\dots,\sw{e_m}$, the algorithm has to compute a spanning tree $\talg$ with the aim to \emph{minimize the expected weight}
\[
    \mathbb{E}_{w\sim W}\left[{w(\talg)}\right]\,.
\]
Arguably, the only reasonable algorithm in this stochastic single sample setting is to choose a minimum spanning tree based on the sampled weights $\sw{e}$.
\begin{definition}[\alg]
    Denote by \alg\ an efficient implementation of Kruskal's algorithm to compute a spanning tree with minimal total weight for the sampled edge weights $\w$, so that for any $\w$, $T^{\alg}=\textup{argmin}_{T\in \trees} \sw{T}$.
\end{definition}
To gauge the quality of this sampling based algorithm, we use as benchmark the \emph{non-adaptive} adversary, which is the optimal solution when the distributions $W(e)$ are known, yet the adversary suffers from the same uncertainty as to realized weights $w(e)$. 
Our benchmark therefore is
\begin{align}\label{eq:opt}
     \opt:=\min_{T\in\trees}\mathbb{E}_{w\sim W}[w(T)]\,.
\end{align}
We refer to a corresponding tree that achieves the optimal solution as $T^{\opt}:=\text{argmin}_{T\in\trees}\mathbb{E}_{w\sim W}\left[{w}(T)\right]$.
It follows from 
linearity of expectation that this non-adaptive optimal solution $T^{\opt}$ is just a minimum spanning tree based on \emph{expected} edge weights, since for any tree $T$, $\mathbb{E}_{w\sim W}\left[{w}(T)\right]=\sum_{e\in T}\mathbb{E}_{w\sim W}\left[w(e)\right]$.

Our primary goal is to find conditions on the weight distributions $W(e)$ under which the sampling based algorithm \alg\ has a provably good expected performance, so that for some $\alpha\ge 1$
\begin{align*}
    \mathbb{E}_{\w\sim W}\mathbb{E}_{w\sim W}\left[w(\tsam)\right]\le \alpha \cdot \opt\,.
\end{align*}
Note the double expectation, because $\tsam$ depends on $\w$. If no ambiguity arises, we will henceforth simply write $\E{\cdot}$ instead. 

Benchmark \opt\ is comparable to the ones also being used in other stochastic combinatorial optimization problems, e.g., in stochastic scheduling~\cite{MSU99} or stochastic knapsack~\cite{DeanGoemansVondrak2008}. Alternatively, one could consider a stronger benchmark, namely the \emph{adaptive} optimum which is the expected value $\mathbb{E}_{w\sim W}\left[\min_{T\in\trees}w(T)\right]$. This expectation of the cost of the pointwise minimum spanning tree and the corresponding random variable has been analyzed in several papers~\cite{Kulkarni1988,Jain1988,HutsonShier2005}.
However, already \cite{Jain1988} gives examples to show a large gap between the values that we call non-adaptive and adaptive optimum here (cp.\ $E\xi$ and $g$ in \cite{Jain1988}).  Indeed, for illustration consider the following example, analogous to the one in~\cite{PuckUetz24}, that separates the two different adversaries and shows that 
in general, the adaptive adversary may be too strong for the single sample setting that we consider here: Take a graph with two vertices, two parallel edges, one edge with weight $1$, and the other edge with weight $0$ with probability $(1-1/M)$ and weight $M^2$ with probability $1/M$, for large integer $M$. Then $\mathbb{E}_{w\sim W}\left[\min_{T\in\trees}w(T)\right]=1/M$, while $\min_{T\in\trees}\mathbb{E}_{w\sim W}[w(T)]=1$.

Note that the same example shows that, without any  assumptions on the weight distributions $W(e)$, the performance of the sampling based algorithm \sam\ can be arbitrarily bad even in comparison to the non-adaptive adversary~\eqref{eq:opt}, because the samples are misleading with overwhelming probability: Here algorithm \sam\ chooses the wrong edge with probability $\approx 1$, hence the expected cost $\E{w(\tsam)}$ is yet another factor $\approx M$ higher than $\opt$. 

As a matter of fact, there are simple examples that rule out reasonable classes of candidate distributions, too, e.g., symmetric weight distributions: Take a graph with two vertices and two parallel edges, one with weight $1$, and the other with stochastic weight $0$ or $M>2$, each with probability $1/2$. Then $\opt{}=1$, while the expected cost of \sam{} equals $M/2$.
In the rest of the paper, we therefore restrict to exponentially distributed edge weights, and show that this suffices to obtain provably good performance guarantees. While it is true that this restriction is mainly motivated by its nice analytical properties, we note that we are not the first to study stochastic spanning trees in the setting with exponentially distributed edge weights, e.g.,~\cite{Kulkarni1988,Bailey1994}.

\section{Exponentially distributed edge weights}
\label{sec:main_results}
Here we give the proof of the main result of this paper. Recall that $B(G)$ denotes a largest bond in a given (multi-)graph $G=(V,E)$ with $|E|=m$, and let us define $b:=|B(G)|$ as the size of a largest bond of $G$. Moreover, for exponentially distributed edge weights, we denote by $\llamb_G=(\lambda_1,\dots,\lambda_m)$ the vector of rate parameters of the exponential distributions of edges, so that for edge $e\in E$ we have $w(e)\sim \exp(\lambda_e)$. Our main result follows.
\begin{theorem}
\label{main_thm_graphs}
	For all connected graphs $G$ with exponentially distributed stochastic edge weights, algorithm \sam\ is a $b$-approximation algorithm, that is, 
 \[
 \E{w(\tsam)}\le b\cdot\textup{\opt}\,.
 \]
 Furthermore, for all connected graphs $G$ and all $\eps>0$, there exist exponentially distributed edge weights for $G$ such that $\E{w(\tsam)}\ge (b-\eps)\cdot\textup{\opt}$.
\end{theorem}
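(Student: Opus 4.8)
### Proof Proposal

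The plan is to prove the two halves of Theorem~\ref{main_thm_graphs} separately, with the bulk of the effort devoted to the upper bound $\E{w(\tsam)}\le b\cdot\opt$. The key structural insight I would exploit is the memorylessness and scaling behavior of exponential distributions, together with the contraction machinery built up in Section~\ref{sec:tech_lemmas}. For the upper bound, I would set up an induction on the number of edges $m$ (equivalently, on the number of vertices via contractions). The base case is a graph on two vertices with several parallel edges, which reduces the entire problem to a \emph{stochastic item selection problem}: given $m$ parallel edges with independent exponential weights, pick one based on a single sample each, and compare the expected realized weight of the sampled-minimum edge to the expected weight of the true-minimum-mean edge. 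For exponential distributions this is a clean computation, since if $\w(e_i)\sim\exp(\lambda_i)$ then $\Pr[\w(e_i)=\min_j \w(e_j)]=\lambda_i/\sum_j\lambda_j$, and the expected realized weight conditioned on $e_i$ being sampled-minimal is still $1/\lambda_i$ by independence of the sample from the realization. I would show the resulting ratio is bounded by the number of parallel edges, which is exactly $b$ in the two-vertex case.

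For the inductive step on general graphs, I would use Lemma~\ref{lemma_contract_bond} to select an edge $e$ with $B(G)=B(G/e)$, so that contracting preserves the largest bond size; the approximation factor we are trying to prove does not grow. The subtlety is relating $\E{w(\tsam)}$ on $G$ to the corresponding quantity on $G/e$ and $G\setminus e$. Here I would condition on whether edge $e$ is included in $\tsam$, and use the optimality characterization of Kruskal's algorithm: an edge is rejected precisely when it is the maximum-weight edge on some cycle, which by Lemma~\ref{lemma_circuit_same_weight} is well-defined up to weight ties. The contraction $G/e$ handles the event that $e$ is selected, while $G\setminus e$ handles its deletion, and I would argue the expected weights decompose so that the inductive hypothesis applies to a smaller graph with $|B|$ not exceeding $b$ (using Lemma~\ref{lemma_bond_decrease} for the deletion branch). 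The memorylessness of the exponential is what lets the conditional distributions on the reduced graphs remain exponential so that induction closes.

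For the lower bound (tightness), I would construct, for any graph $G$ and any $\eps>0$, a family of rate parameters realizing the bond of size $b$. The natural construction is to place $b$ ``competitive'' parallel-like edges across a largest bond $B(G)$ with carefully chosen rates, and make all other edges essentially free (very large rate, hence near-zero weight) so that the problem concentrates on choosing one edge across the bond. Using the two-vertex item-selection analysis, I would tune the $b$ rates — for instance making them nearly equal, so every edge across the bond is equally likely to be sampled-minimal — to push the realized-versus-optimal ratio arbitrarily close to $b$. The asymptotic ``$b-\eps$'' arises from the fact that the free edges are not exactly weightless and the rates are not exactly equal.

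I expect the main obstacle to be the inductive step, specifically making the decomposition of $\E{w(\tsam)}$ across the $G/e$ and $G\setminus e$ branches rigorous while keeping the edge-weight distributions exponential after conditioning. The difficulty is that conditioning on the \emph{sample} of $e$ (to decide inclusion) interacts with the \emph{realized} weights of the remaining edges, and one must carefully argue — via the abstract item-selection reduction hinted at in the abstract — that these conditional subproblems are themselves instances of the same problem on a smaller graph. Getting the bookkeeping right so that the factor stays at $b$ rather than degrading, and verifying that the memoryless property genuinely decouples the sample from the realization in each branch, is where the real work lies.
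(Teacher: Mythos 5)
Your overall architecture (induction via contractions, an item-selection base case on two-vertex multigraphs, memorylessness as the enabling property) matches the paper's, but there are two genuine gaps. First, your tightness construction is backwards: if the $b$ edges across a largest bond have (nearly) equal rates $\lambda$, then $\E{w(\tsam)}=\sum_i\frac{\lambda_i}{\sum_j\lambda_j}\cdot\frac{1}{\lambda_i}=b/(b\lambda)=1/\lambda=\opt$, so the ratio is $1$, not $b$. The supremum $b$ is approached in the opposite regime, where one edge across the bond has a dominant rate $\lambda_1=M\to\infty$ and the others have rate $1$: then $\opt\approx 1/M$, while \sam's rare mistakes (each occurring with probability $\approx 1/M$) each cost about $1$, giving a ratio tending to $b$. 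The paper's Lemma~\ref{lower_bound_thm} reaches this inductively by sending $\lambda_{\ove}\to\infty$ for an edge $\ove\notin B(G)$, so that the contraction preserves the bond size by Lemma~\ref{lemma_contract_bond}.

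Second, for the upper bound your inductive step conditions on whether a fixed edge $e$ belongs to $\tsam$ and splits into a contraction branch $G/e$ and a deletion branch $G\setminus e$. You correctly identify the obstacle---after conditioning on $e\notin\tsam$, i.e., on $\sw{e}$ being maximal on some cycle, the remaining samples are no longer independent exponentials---but you do not resolve it, and the deletion branch is also problematic because deleting an edge changes \opt\ in an uncontrolled way. The paper avoids deletion entirely: it conditions on which edge attains the \emph{minimum sample} (the first Kruskal edge), where memorylessness gives the clean identity $\E{\sam(G,\llamb_G)\mid\ch{G,e,1}}=1/\lambda_e+\E{\sam(G/e,\llamb_{G/e})}$ (Lemma~\ref{lemma_conditional_sam}), pairs this with $\E{\opt(G,\llamb_G)}=1/\lambda_{e^*}+\E{\opt(G/e,\llamb_{G/e})}$ via the exchange edge $e^*$ (Lemma~\ref{lemma_contract_opt}), and then closes the induction with the counting inequality $|E|\le|B(G)|\cdot|T|$ of Lemma~\ref{lemma_edge_bound}. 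That counting inequality is the one ingredient that actually produces the factor $b$ in the key step \eqref{eq:proof4}, and your outline never invokes it.
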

Thus, not only is the performance of the sampling based algorithm \sam{} purely determined by the structure of the graph~$G$, but for \emph{any} given graph $G$ we can also find a rate vector $\llamb_G$ such that this bound is asymptotically tight.

The proof of the theorem goes by two independent inductive arguments via edge contractions. In order to formulate it, we first introduce a bit of shorthand notation. 
To that end, recall that an instance of the single sample stochastic minimum spanning tree problem is defined by the pair $(G,\llamb_G)$, where $\llamb_G=(\lambda_1,\dots,\lambda_m)$ is the vector of rate parameters of $G$'s edges. Let $H$ be a graph which can be obtained from $G$ by contracting edges. Then by $\llamb_H$ we denote the rate vector $\llamb_G$ restricted to edges of graph $H$, that is, $\llamb_H = (\lambda_e)_{e\in{}E(H)}$, where $E(H)\subset E(G)$ denotes the edge set of graph $H$. 

Since we will need to argue about the expected costs of both algorithm and optimum on graphs $G$ and $G/e$, and also sometimes  manipulate rate parameters $\llamb_G$, it is convenient to use the additional notation
\[
    \E{\alg(G,\llamb_G)}\quad\text{as well as}\quad \E{\opt(G,\llamb_G)}
\]
for the expected cost of \sam\ and the optimal solution $T^{\opt}$, respectively on instance $(G,\llamb_G)$. Moreover, lets us denote the relative performance of \sam\  on a given instance $(G,\llamb_G)$ by
\[
    \alpha(G,\llamb_G):= \E{\alg(G,\llamb_G)} / \E{\opt(G,\llamb_G)}\,.
\]
Next, by $\ch{G,e,j}$ we denote the event that the sampling based algorithm \sam{} chooses edge $e$ as its $j$th edge  to be included in $\tsam$, on a given instance $(G,\llamb_G)$. For $j=1$, this denotes the event where edge $e$ is the first edge chosen by \sam{} (which can happen only if $\sw{e}\le \sw{f}$ for all edges $f\in E$).

Finally if the first edge chosen by the sampling based algorithm \sam{}, say edge $e$, is \emph{not} part of any optimal solution $T^{\opt}$, $T^{\opt}\cup{}e$ contains a fundamental cycle, $C(T\cup\{e\})$. Then, a spanning tree with minimal expected weight \emph{conditioned on edge $e$ being included} is obtained by exchanging edge $e$ with an edge with the second largest expected weight among the edges in $C(T\cup\{e\})$. Given $e$ and $T^{\opt}$, denote such an edge by $e^*$. Lemma~\ref{lemma_circuit_same_weight} now yields that the \emph{expected weight} $\E{w(e^*)}=1/\lambda_{e^*}$ is in fact independent of the choice of the tree $T^{\opt}$.
In the case where $e\in{}T^{\opt}$, we simply define $e^* := e$.

\subsection{Auxiliary results for edge contractions}
First, as basis for the subsequent inductive proof, a result about the expected cost of \sam\ conditioned on a particular edge being chosen first. This result requires the edge weights to be memoryless.
\begin{lemma}
	\label{lemma_conditional_sam}
 For every graph $G$ with rate parameters $\llamb_G$ and every edge $e\in E$, 
	\begin{equation*}
		\mathbb{E}[\sam(G,\llamb_G)|\ch{G,e,1}] -  {1}/{\lambda_{e}} = \mathbb{E}[\sam(G/e,\llamb_{G/e})]\,. 
	\end{equation*}
\end{lemma}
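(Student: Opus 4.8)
The plan is to condition on the value of the first sample and then exploit memorylessness to show that, once $e$ has been chosen, \sam\ behaves exactly as it would on $G/e$. First I would note that for any fixed realization of the samples $\w$ the tree $\tsam$ is determined, and by linearity of expectation together with $\E{w(f)}=1/\lambda_f$ the inner expectation over the realized weights is $\mathbb{E}_{w\sim W}[w(\tsam)]=\sum_{f\in\tsam}1/\lambda_f$. Hence it suffices to understand the \emph{distribution of the sampled tree} $\tsam$ under the conditioning $\ch{G,e,1}$; the realized weights play no further role, since they are drawn independently of $\w$.

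Next I would describe $\tsam$ under this conditioning. Because Kruskal adds edges in increasing order of sampled weight, skipping any edge that would close a cycle, the event $\ch{G,e,1}$ means $e$ is the first edge added; contracting $e$ (its parallel edges become loops, which are discarded in $G/e$ and would in any case be rejected by Kruskal on $G$) shows that every subsequent choice of \sam\ on $G$ is precisely a choice of \sam\ on $G/e$. Thus $\tsam=\{e\}\cup T'$, where $T'$ is the greedy tree produced on $G/e$ from the remaining samples $(\sw{f})_{f\in E(G/e)}$. Combined with the first paragraph this gives $\E{\sam(G,\llamb_G)\mid\ch{G,e,1}}=1/\lambda_e+\mathbb{E}[\,\sum_{f\in T'}1/\lambda_f\mid\ch{G,e,1}\,]$, so the subtracted term $1/\lambda_e$ is accounted for and only the conditional law of $T'$ remains to be identified.

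The crux, and the only place where memorylessness enters, is that identification. I would condition further on $\sw{e}=t$. Given that $e$ is the minimizer and $\sw{e}=t$, every remaining sample satisfies $\sw{f}>t$, and by the memoryless property the overshoots $\sw{f}-t$ are mutually independent with $\sw{f}-t\sim\exp(\lambda_f)$ for each $f\in E(G/e)$. Since \sam\ reacts only to the \emph{relative ordering} of the remaining samples, and subtracting the common value $t$ preserves that ordering, the tree $T'$ has exactly the same distribution as the output of \sam\ run on $G/e$ with rate vector $\llamb_{G/e}$, for every $t$ and hence also after integrating over $t$. Consequently $\mathbb{E}[\,\sum_{f\in T'}1/\lambda_f\mid\ch{G,e,1}\,]=\E{\sam(G/e,\llamb_{G/e})}$, and substituting into the previous display and rearranging yields the claim.

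I expect the main obstacle to be making this memorylessness step fully rigorous: one must check that conditioning on the \emph{argmin} event (rather than on a single-variable overshoot) still produces independent, correctly parameterized overshoots, and then justify the passage from ``same relative ordering of the remaining samples'' to ``same distribution of $T'$'' by the fact that Kruskal's output is a function of the sample ranking alone (ties occurring with probability zero). Everything else is bookkeeping around the contraction $G/e$ and the clean decoupling afforded by the independence of the realized weights from the samples.
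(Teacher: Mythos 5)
Your proof is correct and follows essentially the same route as the paper's: identify spanning trees of $G$ containing $e$ with spanning trees of $G/e$, and use memorylessness to argue that, conditioned on $\ch{G,e,1}$, the subsequent behaviour of \sam\ on $G$ matches that of \sam\ on $G/e$. Your treatment of the memorylessness step (conditioning on $\sw{e}=t$, observing that the overshoots $\sw{f}-t$ are again independent $\exp(\lambda_f)$, and noting that Kruskal's output depends only on the ranking of the samples) is a more explicit rendering of what the paper compresses into the identity $\mathbb{P}(\textsc{Ch}(G,f,2)\mid\textsc{Ch}(G,e,1))=\mathbb{P}(\textsc{Ch}(G/e,f,1))$.
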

\begin{proof}
	The proof follows because: (i) conditioned on $\ch{G,e,1}$ with $e=\{u,v\}$, the spanning tree $\tsam$ cannot contain any other $\{u,v\}$-path, so $\tsam/e$ is a spanning tree in $G/e$. Moreover, (ii) a spanning tree $T'$ in $G/e$  yields a spanning tree in $G$, namely $T'\cup \{e\}$.  Hence for each realization of samples $\w$, given the condition, the claim holds. Moreover, by the fact that edge weights are exponentially distributed and hence memoryless, the condition of $e$ being chosen first in $G$ does no harm, because the probabilities of \sam\ choosing edges in $G\setminus e$, given the condition, and \sam\ choosing edges in $G/e$ are identical. Formally, after the first edge $e$ has been chosen, this reads as $\mathbb{P}(\textsc{Ch}(G,f,2)|\textsc{Ch}(G,e,1)) = \mathbb{P}(\textsc{Ch}(G/e,f,1))$
    for all $f\neq e$. Therefore the claim also holds in expectation. \qed
\end{proof}

Next we derive a comparable result for \opt. To that end, given $e\in E$ and some $T^{\opt}$, recall that we defined $e^*=e$ if $e\in{}T^{\opt}$, and otherwise $e^*$ denotes an edge with the second largest expected weight in the unique cycle $C\subseteq T^{\opt}\cup \{e\}$.
\begin{lemma}
	\label{lemma_contract_opt}
 For every graph $G$ with rate parameters $\llamb_G$ and every edge $e\in E$, and $e^*$ (depending on $e$) as defined above, 
	\begin{equation*}
		 \E{\opt(G,\llamb_G)} - 1/\lambda_{e^*} = \E{\opt(G/e,\llamb_{G/e})}\,.
	\end{equation*}
\end{lemma}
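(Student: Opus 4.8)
The plan is to exploit the fact, already noted right after~\eqref{eq:opt}, that $T^{\opt}$ is simply a minimum spanning tree with respect to the expected edge weights $\E{w(f)}=1/\lambda_f$, so that $\E{\opt(G,\llamb_G)}=\sum_{f\in T^{\opt}}1/\lambda_f$. The key structural observation is the standard bijection between spanning trees of the contracted graph $G/e$ and spanning trees of $G$ that contain $e$: if $T'$ is a spanning tree of $G/e$, then $T'\cup\{e\}$ is a spanning tree of $G$ containing $e$, and conversely every spanning tree of $G$ containing $e$ arises this way. Since this bijection changes the expected weight by exactly $\E{w(e)}=1/\lambda_e$, it follows that
\[
    \E{\opt(G/e,\llamb_{G/e})}=\min_{T\in\trees,\,e\in T}\;\sum_{f\in T}1/\lambda_f \;-\; 1/\lambda_e\,.
\]
Thus the lemma reduces to showing that the cheapest spanning tree of $G$ \emph{containing} $e$ exceeds $\E{\opt(G,\llamb_G)}$ by precisely $1/\lambda_e-1/\lambda_{e^*}$.

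I would then split into the two cases that define $e^*$. If $e\in T^{\opt}$, then $e^*=e$ and $T^{\opt}$ is itself the cheapest spanning tree containing $e$, so the surplus is $0=1/\lambda_e-1/\lambda_{e^*}$ and the identity holds. If $e\notin T^{\opt}$, I would invoke the minimum spanning tree exchange property: adding $e$ to $T^{\opt}$ creates the fundamental cycle $C=C(T^{\opt}\cup\{e\})$, on which $e$ has maximum expected weight (otherwise swapping $e$ into $T^{\opt}$ would contradict its $\bar{w}$-optimality). The cheapest spanning tree containing $e$ is then obtained by removing a maximum expected-weight edge of $C\setminus\{e\}$, which is exactly the edge of second-largest expected weight on $C$, i.e.\ the edge $e^*$. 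Hence the surplus equals $1/\lambda_e-1/\lambda_{e^*}$, and combining this with the displayed identity yields $\E{\opt(G/e,\llamb_{G/e})}=\E{\opt(G,\llamb_G)}-1/\lambda_{e^*}$.

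The main obstacle, and the point needing the most care, is twofold. First, one must justify that a \emph{single} edge exchange already produces a minimum-weight spanning tree containing $e$; this is a routine graphic-matroid exchange argument, but it should be stated rather than taken for granted. Second, and more delicately, $e^*$ is defined only relative to a choice of optimal tree $T^{\opt}$, of which there may be several, so one must ensure that $1/\lambda_{e^*}$ is nonetheless well defined. This is precisely where Lemma~\ref{lemma_circuit_same_weight} enters: it guarantees that the relevant maximum (respectively second-maximum) expected weight along the fundamental cycle agrees across all optimal trees, so the right-hand side $1/\lambda_{e^*}$ is independent of the chosen $T^{\opt}$ and the stated identity is unambiguous. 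Ties in expected weight along $C$ are absorbed by the same lemma, since an edge tied with $e$ for the maximum contributes $1/\lambda_{e^*}=1/\lambda_e$ and hence a surplus of $0$, which is again consistent.
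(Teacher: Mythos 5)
Your proposal is correct and follows essentially the same route as the paper: the bijection between spanning trees of $G/e$ and spanning trees of $G$ containing $e$, the case split on whether $e$ lies in some $T^{\opt}$, the single-exchange argument producing the cheapest tree containing $e$, and the appeal to Lemma~\ref{lemma_circuit_same_weight} to make $1/\lambda_{e^*}$ well defined across optimal trees. The points you flag as needing care are exactly the ones the paper relies on, so no gap remains.
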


\begin{proof}
	\textbf{Case 1.} (There exists $T^{\opt}$ with $e\in{}T^{\opt}$)
 Since $e\in{}T^{\opt}$, we have $e^* = e$. Furthermore, $T^{\opt}/e$ is an MST in $G/e$ with respect to expected weights. If not, there exists some other tree $T'$ in $G/e$ which has a smaller expected total weight. However, this would imply that $T'\cup{}e$ is also cheaper in expectation than $T^{\opt}$, a contradiction.  This gives
\begin{align*}
		\mathbb{E}[\opt(G/e,\llamb_{G/e})]  
  &= \mathbb{E}[\opt(G,\llamb_G)] - 1/\lambda_e = \mathbb{E}[\opt(G,\llamb_G)] - 1/\lambda_{e^*}\,.
	\end{align*}
	\textbf{Case 2.} (For all $T^{\opt}$, $e\notin{}T^{\opt}$)
  Given some $T^{\opt}$, to obtain a minimum spanning tree conditioned on including edge $e$, we can exchange $e^*$ with $e$. We know by Lemma~\ref{lemma_circuit_same_weight} that the expected weight of $e^*$ is independent of the choice of $T^{\opt}$, so that we can write
 \[
 \E{\opt(G,\llamb_G)|e\in T^{\opt}}= \E{\opt(G,\llamb_G)}+1/\lambda_e-1/\lambda_{e^*}\,.
 \]
 Moreover, by the same line of argument as in Case~1, we have
 \[
\E{\opt(G/e,\llamb_{G/e})}+ 1/\lambda_e = \E{\opt(G,\llamb_G)|e\in T^{\opt}}\,,
 \]
 and the claim follows.\qed
\end{proof}

\subsection{Upper bound on the performance of sampling based algorithm}
With the two previous auxiliary results, we can prove an upper bound on \sam's performance as follows. To that end, denote by $\Lambda_G$ the set of all non-negative rate parameters $\llamb_G$.
\begin{lemma}
	\label{upper_bound_thm}
		For all connected (multi)graphs $G$ with exponentially distributed edge weights, the performance bound of the sampling based algorithm \sam{} is no worse than the size of the largest bond in $G$. That is,
	\begin{equation*}
	\sup_{\llamb_G\in{}\Lambda_G} \alpha(G,\llamb_G) \leq{} |B(G)|\,.
\end{equation*}
\end{lemma}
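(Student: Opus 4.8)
The plan is to prove the equivalent multiplicative statement $\E{\alg(G,\llamb_G)}\le |B(G)|\cdot\E{\opt(G,\llamb_G)}$ for every rate vector $\llamb_G$, which immediately bounds $\sup_{\llamb_G}\alpha(G,\llamb_G)$ and sidesteps the degenerate case $\E{\opt}=0$. I would induct on the number of vertices $|V|$, since an edge contraction decreases $|V|$ by one and preserves connectivity. The base case $|V|=1$ is the empty graph, where both sides vanish. For the inductive step I condition on the first edge chosen by \sam. Writing $S:=\sum_{f\in E}\lambda_f$, the minimum property of independent exponentials gives $\prob{\ch{G,e,1}}=\lambda_e/S$, and Lemma~\ref{lemma_conditional_sam} lets me peel off that first edge:
\[
\E{\alg(G,\llamb_G)}=\sum_{e\in E}\prob{\ch{G,e,1}}\Bigl(\tfrac{1}{\lambda_e}+\E{\alg(G/e,\llamb_{G/e})}\Bigr)\,.
\]

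Next I would feed the induction hypothesis into the contracted terms. As $G/e$ has one fewer vertex, $\E{\alg(G/e,\llamb_{G/e})}\le |B(G/e)|\cdot\E{\opt(G/e,\llamb_{G/e})}$, and Lemma~\ref{lemma_bond_decrease} upgrades $|B(G/e)|$ to $|B(G)|$ (legitimate because expected weights are non-negative). I then substitute $\E{\opt(G/e,\llamb_{G/e})}=\E{\opt(G,\llamb_G)}-1/\lambda_{e^*}$ from Lemma~\ref{lemma_contract_opt}, where $e^*$ is the swap-edge associated with $e$ whose expected weight $1/\lambda_{e^*}$ is well-defined by Lemma~\ref{lemma_circuit_same_weight}. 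Since $\sum_{e}\prob{\ch{G,e,1}}=1$, the term $|B(G)|\cdot\E{\opt(G,\llamb_G)}$ factors out cleanly, so the desired inequality reduces to the purely local comparison
\[
\sum_{e\in E}\prob{\ch{G,e,1}}\tfrac{1}{\lambda_e}\ \le\ |B(G)|\sum_{e\in E}\prob{\ch{G,e,1}}\tfrac{1}{\lambda_{e^*}}\,.
\]

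Finally, substituting $\prob{\ch{G,e,1}}=\lambda_e/S$ collapses the left side to $m/S$ and the right side to $(|B(G)|/S)\sum_{e\in E}\lambda_e/\lambda_{e^*}$, so everything comes down to the combinatorial bound $m\le |B(G)|\sum_{e\in E}\lambda_e/\lambda_{e^*}$. Here I would fix one optimal tree $T^{\opt}$: each of its $|T^{\opt}|=|V|-1$ edges has $e^*=e$ and contributes exactly $1$, while every remaining edge contributes a non-negative amount, giving $\sum_{e}\lambda_e/\lambda_{e^*}\ge |T^{\opt}|$. Combined with Lemma~\ref{lemma_edge_bound}, which states $m\le |B(G)|\cdot|T^{\opt}|$, this closes the induction. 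I expect the main obstacle to lie in the bookkeeping of this last step: recognizing that, after both contraction lemmas are applied, the per-edge cost of the first pick ($1/\lambda_e$) and the per-edge saving in \opt\ ($1/\lambda_{e^*}$) are balanced by precisely an edge-counting argument, so that Lemma~\ref{lemma_edge_bound} is exactly the inequality needed. A minor technical point is that degenerate rates $\lambda_e=0$ must not spoil the supremum; this is handled by noting such edges contribute zero on both sides, or by a limiting argument over strictly positive rates.
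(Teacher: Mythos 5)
Your proposal is correct and follows essentially the same route as the paper: condition on \sam's first pick, apply Lemma~\ref{lemma_conditional_sam}, the induction hypothesis together with Lemma~\ref{lemma_bond_decrease}, and Lemma~\ref{lemma_contract_opt}, and then close the resulting inequality $m\le |B(G)|\sum_{e}\lambda_e/\lambda_{e^*}$ by splitting edges into those inside and outside a fixed $T^{\opt}$ and invoking Lemma~\ref{lemma_edge_bound}. The only (immaterial) differences are that you induct on $|V|$ rather than on $|E|$ and phrase the bound multiplicatively instead of as a ratio.
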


\begin{proof}
The proof is by induction on the number of edges of $G$. First, the lemma clearly holds for connected graphs with $m=1$ edges. So assume that the lemma holds for all (multi)graphs with $m=1$ up to $k$ edges, $k\ge 1$. 
Consider connected graph $G$ with $k+1\ge 2$ edges. 
Then for all $\llamb_G\in\Lambda_G$:
\begin{align}\nonumber
 \alpha(G,\llamb_G) &= 
 \frac{\mathbb{E}[\sam(G,\llamb_G)]}{\mathbb{E}[\opt(G,\llamb_G)]}\\\nonumber
	&= 
  \frac{\sum_{e\in{}E}\mathbb{P}(\textsc{Ch}(G,e,1))\mathbb{E}[\sam(G,\llamb_G)|\textsc{Ch}(G,e,1)]}{\mathbb{E}[\opt(G,\llamb_G)]}\\
	&=  
 \frac{\sum_{e\in{}E}\mathbb{P}(\textsc{Ch}(G,e,1))\Bigl(\mathbb{E}[\sam(G/e,\llamb_{G/e})]+1/\lambda_e\Bigr)}{\mathbb{E}[\opt(G,\llamb_G)]}\,.\label{eq:proof1}
        \end{align}
Here, we have first conditioned on which edge is first chosen by \sam, and the second equality holds by Lemma~\ref{lemma_conditional_sam}. Now by the definition of $\alpha$, \eqref{eq:proof1} equals
\begin{align}\label{eq:proof2}
	& 
        \frac{\sum_{e\in{}E}\mathbb{P}(\textsc{Ch}(G,e,1))\Bigl(\alpha(G/e,\llamb_{G/e})\mathbb{E}[\opt(G/e,\llamb_{G/e})]+1/\lambda_e\Bigr)}{\mathbb{E}[\opt(G,\llamb_G)]}\,.
\end{align}
Now by the induction hypothesis  $\alpha(G/e,\llamb_{G/e}) \leq{} |B(G/e)|$.  Also, by Lemma~\ref{lemma_bond_decrease} we have $|B(G/e)|\le |B(G)|$.
So via \eqref{eq:proof2}, $\alpha(G,\llamb_G)$ is upper bounded by
\begin{align}\label{eq:proof3}
	& |B(G)|\frac{\sum_{e\in{}E}\mathbb{P}(\textsc{Ch}(G,e,1))\Bigl(\mathbb{E}[\opt(G/e,\llamb_{G/e})]+{1}/(\lambda_e|B(G)|)\Bigr)}{\mathbb{E}[\opt(G,\llamb_G)]}\,.
 \end{align}
Now if we apply Lemma \ref{lemma_contract_opt}, \eqref{eq:proof3} equals
\begin{align*}
	& |B(G)|\frac{\mathbb{E}[\opt(G,\llamb_G)]+\sum_{e\in{}E}\mathbb{P}(\textsc{Ch}(G,e,1))\Bigl({1}/(\lambda_e|B(G)|)-1/\lambda_{e^*}\Bigr)}{\mathbb{E}[\opt(G,\llamb_G)]}\,.
 \end{align*}
That means that the proof is done if it is true that for all $\llamb_G\in\Lambda_G$
\begin{align}\label{eq:proof4}
    \sum_{e\in{}E}\mathbb{P}(\textsc{Ch}(G,e,1))\Bigl({1}/(\lambda_e|B(G)|)-1/\lambda_{e^*}\Bigr)\le 0\,.
\end{align}
With the help of Lemma~\ref{lemma_edge_bound}, we can indeed show  that this is the case; see Appendix~\ref{sec:missingproofs}.
Thus, we conclude that $\alpha(G,\llamb_G) \leq{} |B(G)|$ for all $\llamb_G\in\Lambda_G$. \qed
\end{proof}

\subsection{Lower bound on the performance of sampling based algorithm}
Maybe surprisingly, the upper bound is simultaneously a lower bound for any given graph $G$. This is achieved by choosing corresponding rate distributions $\llamb_G$ for given $G$.  The following lemma completes the proof of Theorem~\ref{main_thm_graphs}.
\begin{lemma}
	\label{lower_bound_thm}
	For all connected (multi)graphs $G$ with exponentially distributed edge weights, the performance bound of the sampling based algorithm \sam{} can be no better than $|B(G)|$. That is,
	\begin{equation*}
	\sup_{\llamb_G\in{}\Lambda_G} \alpha(G,\llamb_G) \geq{} |B(G)| \,\text{.}
\end{equation*}
\end{lemma}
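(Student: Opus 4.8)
The plan is to prove the matching lower bound by exhibiting, for every connected multigraph $G$ and every $\eps>0$, a rate vector $\llamb_G$ with $\alpha(G,\llamb_G)\ge |B(G)|-\eps$. I would do this by induction on the number of vertices, at each step reducing $G$ to a contraction $G/e$ that \emph{preserves} the largest bond, so that the whole argument bottoms out at the simplest graph realizing a bond of size $b:=|B(G)|$, namely two vertices joined by $b$ parallel edges. The key point is that the two auxiliary lemmas, Lemma~\ref{lemma_conditional_sam} and Lemma~\ref{lemma_contract_opt}, let me transfer a bad instance for $G/e$ back up to $G$ without losing performance in the limit.

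For the base case ($|V|=2$), $G$ consists of $b$ parallel edges and $|B(G)|=b$, while every spanning tree is a single edge. I would assign one edge $e_0$ a large rate $\lambda_0$ and each of the remaining $b-1$ edges rate $1$. Then $\E{\opt(G,\llamb_G)}=1/\lambda_0$, since \opt\ takes the cheapest-in-expectation edge $e_0$, whereas \sam\ selects the edge of minimum sample, which is $e_0$ with probability $\lambda_0/(\lambda_0+b-1)$ and a unit-rate edge otherwise. Because the realized weight is independent of the sample, a short computation gives
\[
\E{\sam(G,\llamb_G)}=\frac{b}{\lambda_0+b-1},\qquad \alpha(G,\llamb_G)=\frac{b\,\lambda_0}{\lambda_0+b-1}\xrightarrow[\lambda_0\to\infty]{}b,
\]
so choosing $\lambda_0$ large enough yields $\alpha(G,\llamb_G)\ge b-\eps$.

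For the inductive step ($|V|\ge 3$), Lemma~\ref{lemma_contract_bond} supplies an edge $e$ with $B(G/e)=B(G)$, hence $|B(G/e)|=b$ and $G/e$ has one fewer vertex. By the induction hypothesis there is a rate vector $\llamb_{G/e}$ on $E\setminus\{e\}$ with $\alpha(G/e,\llamb_{G/e})\ge b-\eps'$. I would extend it to $G$ by setting $\lambda_e=\lambda_0$ and letting $\lambda_0\to\infty$. Since $e$ then has strictly the smallest expected weight, it lies in some optimal tree, so Lemma~\ref{lemma_contract_opt} applies with $e^*=e$ and gives $\E{\opt(G,\llamb_G)}=\E{\opt(G/e,\llamb_{G/e})}+1/\lambda_0\to\E{\opt(G/e,\llamb_{G/e})}$. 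On the algorithm's side, $\mathbb{P}(\ch{G,e,1})=\lambda_0/\sum_{f}\lambda_f\to 1$, and by Lemma~\ref{lemma_conditional_sam} the cost conditioned on $\ch{G,e,1}$ equals $\E{\sam(G/e,\llamb_{G/e})}+1/\lambda_0$; hence $\E{\sam(G,\llamb_G)}\to\E{\sam(G/e,\llamb_{G/e})}$ as well. Dividing, $\alpha(G,\llamb_G)\to\alpha(G/e,\llamb_{G/e})\ge b-\eps'$, and taking $\lambda_0$ large and $\eps'$ small completes the step.

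The hard part will be making the two limiting statements rigorous rather than heuristic. For \sam\ I must control the terms coming from the event $\neg\ch{G,e,1}$: although its probability vanishes, I need its conditional contribution to the double expectation to stay bounded, which holds because the expected weight of any spanning tree is at most $\sum_{f}1/\lambda_f$, a quantity that remains bounded as $\lambda_0\to\infty$. I also need $\E{\opt(G/e,\llamb_{G/e})}>0$ so the ratio behaves well in the limit; this is guaranteed since for $|V|\ge 3$ the graph $G/e$ still has at least one edge of finite positive rate in every spanning tree. Finally, the $\eps$-bookkeeping must be arranged so that the chosen $\lambda_0$ — which depends on $\eps'$ and on the already-fixed vector $\llamb_{G/e}$ — delivers the target $b-\eps$; because $\llamb_{G/e}$ is fixed \emph{before} the limit is taken, the convergence is that of a single-parameter family, so no delicate interchange of limits is required.
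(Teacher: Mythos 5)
Your argument is correct and is essentially the paper's own proof: both select an edge $\ove\notin B(G)$ (which exists by Lemma~\ref{lemma_contract_bond}), drive its rate to infinity so that \sam{} picks it first with probability tending to one and $e^*=\ove$, and then combine Lemmas~\ref{lemma_conditional_sam} and~\ref{lemma_contract_opt} with the induction hypothesis applied to $G/\ove$, whose largest bond still has size $|B(G)|$. The only cosmetic difference is that you induct on $|V|$ with an explicit two-vertex (item-selection) base case rather than on $|E|$ with the single-edge base case, which is if anything slightly cleaner since it treats two-vertex multigraphs directly.
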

\begin{proof}
The proof goes again  by induction on $m$, the number of edges in $G$. For the case $m=1$, it is clear that there is only one spanning tree to choose from, thus the performance bound of \sam{} equals 1, which is (greater than or) equal to $|B(G)|$. Assume the statement holds for graphs with up to $k$ edges, and consider a graph $G$ with $m=k+1\ge 2$ edges, so in particular $|V|>2$. By subsequently using conditioning on the first edge being chosen by \sam, and then applying in this order Lemma~\ref{lemma_conditional_sam}, the definition of $\alpha$, and Lemma~\ref{lemma_contract_opt}, we get that 
\begin{align*} 
  \alpha(G,\llamb_G) &= 
  \frac{\sum_{e\in{}E}\mathbb{P}(\textsc{Ch}(G,e,1))\mathbb{E}[\sam(G,\llamb_G)|\textsc{Ch}(G,e,1)]}{\mathbb{E}[\opt(G,\llamb_G)]}\\
&= 
\frac{\sum_{e\in{}E}\mathbb{P}(\textsc{Ch}(G,e,1))\Bigl(\mathbb{E}[\sam(G/e,\llamb_{G/e})]+{1}/{\lambda_e}\Bigr)}{\mathbb{E}[\opt(G,\llamb_G)]}\\
&= 
\frac{\sum_{e\in{}E}\mathbb{P}(\textsc{Ch}(G,e,1))\Bigl(\alpha(G/e,\llamb_{G/e})\mathbb{E}[\opt(G/e,\llamb_{G/e})]+1/\lambda_e\Bigr)}{\mathbb{E}[\opt(G,\llamb_G)]}\\
&=
\sum_{e\in{}E}\mathbb{P}(\textsc{Ch}(G,e,1))\alpha(G/e,\llamb_{G/e})\\
        &\hspace{1cm}+\sum_{e\in{}E}\mathbb{P}(\textsc{Ch}(G,e,1))\left(\frac{1}{\lambda_e\mathbb{E}[\opt(G,\llamb_G)]}-\frac{\alpha(G/e,\llamb_{G/e})}{\lambda_{e^*}\mathbb{E}[\opt(G,\llamb_G)]}\right)\,.
\end{align*}
Now choose an edge $\ove \notin{} B(G)$, which by  Lemma~\ref{lemma_contract_bond} exists because $G$ is connected and $|V|>2$.  The idea is to let the rate $\lambda_{\ove}$ tend to infinity, while assuming $\lambda_e\in{}o(\lambda_{\overline{e}})\, \forall{}e\neq{}\overline{e}$, so that in this limit, algorithm \sam\ chooses edge $\ove$ with probability 1. Note that this also implies that $e^*=\ove$, because edge $\ove$ is the unique cheapest edge, which implies $\ove\in T^{\opt}$. Then, taking the limit $\lambda_{\ove}\to\infty$ makes all but the $\ove$-term vanish in the above equation for $\alpha(G,\llamb_G)$, since all terms in the above equation are bounded for fixed $G$. This therefore yields 
\begin{align*}
\sup_{\llamb_G\in{}\Lambda_G} \alpha(G,\llamb_G) &\geq{}
\sup_{\llamb_G\in{}\Lambda_G,\lambda_{\ove}\to\infty} \Bigg(\alpha(G/\ove,\llamb_{G/\ove}) +\frac{1-\alpha(G/\ove,\llamb_{G/\ove})}{\lambda_{\ove}\mathbb{E}[\opt(G,\llamb_G)]}\Bigg)\\
         &= \sup_{\llamb_G \in{} \Lambda_{G/\ove}}\alpha(G/\ove,\llamb_{G/\ove})\,,
\end{align*}
since the first term does not depend on $\lambda_{\ove}$ and last term vanishes. Now we can use the 
induction hypothesis, which gives
\begin{align*}
	\sup_{\llamb_G \in{} \Lambda_{G/\ove}}\alpha(G/\overline{e},\llamb_{G/\ove})&\geq{} |B(G/\overline{e})|\,.
\end{align*}
Since we had chosen $\overline{e}\notin{}B(G)$, we finally have $|B(G/\overline{e})| = |B(G)|$ by Lemma~\ref{lemma_contract_bond}.
Thus, we have shown that $\sup_{\llamb_G\in{}\Lambda_G} \alpha(G,\llamb_G) \geq{} |B(G)|$.\qed
\end{proof}

\subsection{Interpretation as item selection}
At first sight it may seem surprising that the tight performance bound
does only depend on the structure of the graph. However since our proof is by induction and using edge contractions, there is a simple explanation of the result: By subsequent edge contractions, one can reach a graph with only two vertices and a number of parallel edges, while that number is equal to the largest bond in the graph. The spanning tree problem on such a two-vertex (multi)graph is equivalent to selecting the lightest of a number of items, each with a stochastic weight, while the selection must be done on the basis of only one sample per item. 
Hence, one can interpret our inductive proof also as a reduction of the minimum spanning tree problem to an  stochastic minimum cost single item selection problem.  The tight performance bound for that problem is equal to the number of items.

\section{Extension to matroids}
\label{sec:main_matroids}
One may wonder if other combinatorial subset selection problems with stochastic costs are amenable to the same type of analysis as proposed in this paper.  For our techniques to work, concepts such as edge contractions and bonds should then have a well defined meaning and correspondence.
That being said, a generalization from spanning trees to matroids $M$ on ground set of elements $E$ suggests itself. Indeed, when the elements $E$ of a matroid $M=(E,\ind)$ have exponentially distributed weights, this problem has earlier been considered, e.g., in~\cite{Bailey1994} for analyzing 
the distribution function of pointwise minimum weight bases.

For matroids, edge contractions translate into the contraction of the matroid with respect to singleton elements $e\in E$:  The contraction $M/e$ of a loopless matroid $M$ is a matroid on ground set $M\setminus\{e\}$,  with independent sets $F\subseteq M\setminus \{e\}$ such that $F\cup\{e\}$ is independent in $M$. Moreover, the largest bond of a graph translates into a largest cocircuit of the matroid $M$.
Finally, Kruskal's greedy algorithm~\cite{Kruskal1956} based on sampled edge weights for minimum spanning trees has its correspondence in Edmonds' greedy algorithm for weighted matroids~\cite{Edmonds1971}, so that \sam{} would be the greedy algorithm applied to the matroid with sampled element weights. That being said, one can easily verify that the same inductive proof that we use to derive 
Lemma~\ref{upper_bound_thm}, as well as Lemmas~\ref{lemma_conditional_sam} and~\ref{lemma_contract_opt}, generalize one-to-one to matroids by carefully replacing the corresponding terms, that is, edge $\to$ element, minimum spanning tree $\to$ minimum weight basis, etc. The basis of the inductive proof of Lemma~\ref{upper_bound_thm} generalizes directly, because \emph{all} matroids with at most three elements are known to be graphic \cite[p.\ 12]{Oxley}. 
This yields the following.
\begin{theorem}\label{thm:matroids}
    For loopless matroid $M$ with exponentially distributed element weights, the performance bound of the sampling based greedy algorithm \sam{} is no worse than $c^*(M)$, the size  of a largest cocircuit of $M$. 
\end{theorem}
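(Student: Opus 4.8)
The plan is to mirror the graph proof of Lemma~\ref{upper_bound_thm} verbatim in the language of matroids, relying on the fact that the three supporting lemmas transfer without change. First I would fix a loopless matroid $M=(E,\ind)$ with rate vector $\llamb_M$ and define, exactly as in the graph setting, the quantity $\alpha(M,\llamb_M):=\E{\alg(M,\llamb_M)}/\E{\opt(M,\llamb_M)}$, where $\alg$ is Edmonds' greedy algorithm on the sampled weights $\w$ and $\opt$ is the minimum-expected-weight basis. The proof proceeds by induction on $|E|$, the number of elements. For the base case, I would invoke the fact that every matroid on at most three elements is graphic \cite[p.\ 12]{Oxley}, so the desired bound $\alpha(M,\llamb_M)\le c^*(M)$ follows directly from the already-established graph result (Lemma~\ref{upper_bound_thm}) on the corresponding small graph.

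For the inductive step I would condition on the first element $e$ chosen by the greedy algorithm. The matroid analogues of Lemmas~\ref{lemma_conditional_sam} and \ref{lemma_contract_opt} give, respectively,
\[
\E{\alg(M,\llamb_M)\mid\ch{M,e,1}}-1/\lambda_e=\E{\alg(M/e,\llamb_{M/e})}
\]
and
\[
\E{\opt(M,\llamb_M)}-1/\lambda_{e^*}=\E{\opt(M/e,\llamb_{M/e})}\,,
\]
where $e^*$ is the element of second-largest expected weight in the fundamental circuit of $e$ with respect to $\opt$'s basis (and $e^*=e$ when $e$ lies in that basis). These identities hold because memorylessness of the exponential distributions makes the conditioning harmless, and because contraction of a matroid with respect to a single element behaves just like edge contraction with respect to bases and circuits. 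Substituting as in \eqref{eq:proof1}--\eqref{eq:proof3}, and applying the induction hypothesis $\alpha(M/e,\llamb_{M/e})\le c^*(M/e)$ together with the matroid version of Lemma~\ref{lemma_bond_decrease}, namely $c^*(M/e)\le c^*(M)$, reduces the claim to the inequality
\[
\sum_{e\in E}\mathbb{P}(\textsc{Ch}(M,e,1))\Bigl(1/(\lambda_e\,c^*(M))-1/\lambda_{e^*}\Bigr)\le 0\,,
\]
the exact counterpart of \eqref{eq:proof4}, which is settled by the matroid analogue of Lemma~\ref{lemma_edge_bound} (bounding $|E|$ by $c^*(M)$ times the rank, i.e.\ the size of a basis, via the greedy/Edmonds correctness).

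The main obstacle I anticipate is \emph{not} the probabilistic or algebraic manipulation, which is formally identical to the graph case, but rather verifying that each structural lemma genuinely carries over. Specifically, I would need to confirm that cocircuit size is monotone under contraction (replacing the bond argument of Lemma~\ref{lemma_bond_decrease}), that a counting bound $|E|\le c^*(M)\cdot r(M)$ holds with $r(M)$ the rank (replacing Lemma~\ref{lemma_edge_bound}, whose proof rested on Prim's algorithm and so must be recast using the greedy exchange structure of matroids), and that the weight-invariance of $e^*$ (Lemma~\ref{lemma_circuit_same_weight}) survives when ``cycle'' is read as ``fundamental circuit'' and the strong basis exchange property replaces the path argument. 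Because all of these are standard consequences of the matroid axioms and duality between circuits and cocircuits, I expect the transfer to be routine once the dictionary (edge $\to$ element, spanning tree $\to$ basis, bond $\to$ cocircuit, cycle $\to$ circuit) is fixed; hence the theorem follows by the same two-line induction as Lemma~\ref{upper_bound_thm}.
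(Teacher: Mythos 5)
Your proposal follows essentially the same route as the paper: the identical induction from Lemma~\ref{upper_bound_thm} under the dictionary edge $\to$ element, tree $\to$ basis, bond $\to$ cocircuit, with the base case settled by the fact that all matroids on at most three elements are graphic, and with the same three structural facts (monotonicity of $c^*$ under contraction, the bound $|E|\le r(M)\cdot c^*(M)$, and the weight-invariance of $e^*$) correctly identified as the only pieces needing fresh verification. The paper supplies exactly those three lemmas in its appendix (proving the counting bound via hyperplane complements rather than a greedy run, a cosmetic difference), so your plan is sound and matches the paper's argument.
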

As we argued before in Lemma~\ref{lower_bound_thm}, for graphical matroids this result is also tight. What still needs to be verified is the generalizations of Lemma~\ref{lemma_bond_decrease},~\ref{lemma_edge_bound}, and~\ref{lemma_circuit_same_weight} to matroids, as these are necessary to derive the upper bound in Lemma~\ref{upper_bound_thm}. For this, we refer to Appendix~\ref{sec:matroids}.

\section{Conclusions}
It is a natural question if the results that we give can be extended to weight distributions other than exponentials. In the  proof of Lemma~\ref{lemma_conditional_sam} as well as in the proof of \eqref{eq:proof4} we require the distributions to be memoryless. Noting that $\le$ would suffice in Lemma~\ref{lemma_conditional_sam} to derive an upper bound on \sam's performance, it is not clear how that would allow to generalize to a class broader than exponentials.
Furthermore, and in line with the prophet inequality setting, it is tempting to consider maximization instead of minimization problems. Then the effect of choosing a ``wrong'' edge (or item) is less detrimental than it is for minimization problems. Small examples suggest a much stronger performance bound. It turns out, however, that the analysis is more challenging; we leave this for future work.

\subsection*{Acknowledgements} Thanks to Matthias Walter for suggesting the simple proof of Lemma~\ref{lemma_circuit_same_weight}, and to Jorn van der Pol for the short proof of Lemma~\ref{lem:matroid_edge_bound}.

\bibliographystyle{splncs04}
\bibliography{MST-One-Sample}

\appendix

\section{Omitted proofs}
\label{sec:missingproofs}

\subsection*{Proof of \eqref{eq:proof4}}
We have to show that for all $\llamb_G\in\Lambda_G$
\begin{align*}
    \sum_{e\in{}E}\mathbb{P}(\textsc{Ch}(G,e,1))\Bigl({1}/(\lambda_e|B(G)|)-1/\lambda_{e^*}\Bigr)\le 0\,.
\end{align*}
\begin{proof}
As edge weights of $G=(V,E)$ are exponentially distributed, we have that 
$\mathbb{P}(\textsc{Ch}(G,e,1))=\lambda_e/\sum_{f\in E} \lambda_f$. So, we need to show that
\[
\Big(1/\sum_{f\in{}E}\lambda_f\Big)\left[\sum_{e\in{}E}\left(\frac{1}{|B(G)|} - \frac{\lambda_e}{\lambda_{e^*}}\right)\right] \leq{} 0\,.
\]
This is equivalent to showing that
\begin{align*}
	\sum_{e\in{}E}\left(\frac{1}{|B(G)|} - \frac{\lambda_e}{\lambda_{e^*}}\right) \leq{} 0 \,.
\end{align*}
\end{proof}
To show this, we split the sum into edges in (some) $T^{\opt}$, and outside $T^{\opt}$. 
For~$e\in{}T^{\opt}$, we have $\lambda_e = \lambda_{e^*}$, thus,

\begin{align*}
	 \sum_{e\in{}E}\left(\frac{1}{|B(G)|} - \frac{\lambda_e}{\lambda_{e^*}}\right)&= \frac{|T^{\opt}|}{|B(G)|} - |T^{\opt}| + \sum_{e\notin{}T^{\opt}}\left(\frac{1}{|B(G)|} - \frac{\lambda_e}{\lambda_{e^*}}\right)\\
	 &\leq{} \frac{|T^{\opt}|}{|B(G)|} - |T^{\opt}| + \sum_{e\notin{}T^{\opt}}\frac{1}{|B(G)|}\,.\\
	 \intertext{By Lemma \ref{lemma_edge_bound}, we have $|E|\leq{} |B(G)|\cdot |T^{\opt}|$. Therefore,}
	\sum_{e\in{}E}\left(\frac{1}{|B(G)|} - \frac{\lambda_e}{\lambda_{e^*}}\right)&\leq{} \frac{|T^{\opt}|}{|B(G)|} - |T^{\opt}| + \frac{|B(G)||T^{\opt}|-|T^{\opt}|}{|B(G)|}=0
\end{align*}\qed

\section{Extension to Matroids}
\label{sec:matroids}

Here we give the missing lemmas to prove Theorem~\ref{thm:matroids}.  
We first collect a few  definitions and  facts about matroids. Refer to \cite{Oxley} for proofs and more details.
\begin{definition}[Matroid]
    Let $E$ be a set of elements and $\ind\subseteq 2^E$ a family of subsets of $E$. Then $M=(E,\ind)$ is a matroid if 
    \begin{enumerate}
        \item $\emptyset\in\ind$
        \item If $J\in \ind$ then $I\in\ind$ for all $I\subseteq J$ 
        \item If $I,J\in \ind$ and $|I|<|J|$, then there exists $e\in J\setminus I$ so that $I\cup\{e\}\in\ind$
    \end{enumerate}
\end{definition}
The sets in \ind\ are referred to as \emph{independent} sets, and sets outside \ind\ are \emph{dependent}.  The inclusion-wise maximal independent sets contained in $S\subseteq E$ are the bases of $S$. They all have the same size, denoted the \emph{rank} $r(S)$ of $S$. The bases of the matroid $M$ are the maximal independent subsets $B\subseteq E$, their size $|B|$ is the rank of the matroid, denoted $r(M)$. Minimal dependent sets $C\subseteq E$ are called \emph{circuits}. The graphical matroid of a connected graph $G=(V,E)$ is obtained via $\ind:=\{F\subseteq E\ |\ F\ \text{does not contain a cycle}\}$. It has rank $|V|-1$. If $M$ is a matroid with set of bases $\mathcal{B}(M)$, the \emph{dual} matroid $M^*$ is obtained from $M$ by taking $\mathcal{B}^*(M)=\{E\setminus B\ |\ B\in\mathcal{B}\}$ as its set of bases. (It is nontrivial to show that this indeed defines the bases of a matroid.) The circuits of $M^*$ are called the \emph{cocircuits} of $M$.  Cocircuits of $M$ are minimal subsets of $E$ that have a nontrivial intersection with every basis of $M$. If $M(G)$ is the graphical matroid for graph $G$, then the cocircuits of $M(G)$ are precisely the bonds of $G$ \cite[Prop.\ 2.3.1]{Oxley}.
For $S\subseteq E$, the \emph{closure} of $S$ is $cl(S)=\{e\in E\ |\  r(S\cup\{e\})=r(S)\}$. If $S=cl(S)$, then $S$ is called a \emph{flat}, and a flat $S$ with $|S|=r(M)-1$ is called a \emph{hyperplane} of $M$. It is well known that $X\subseteq E$ is a hyperplane of $M$ iff $E-X$ is a cocircuit of M~\cite[Prop.\ 2.1.6]{Oxley}.  A \emph{loop} of a matroid is a circuit consisting of a singleton element,~$\{e\}$. As with graphs, loops are irrelevant for computing minimum weight bases, so whenever $M$ or $M/e$ contains loops, we may eliminate them.
All this being defined, we get the following.

\begin{lemma}[Generalization of \protect{Lemma~\ref{lemma_bond_decrease}}]
    If $M=(E,\ind)$ is a matroid, and $c^*(M)$ the size of a largest cocircuit, then for all  $e\in{}E$, $c^*(M/e)\leq{}c^*(M)$.
\end{lemma}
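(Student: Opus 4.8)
The plan is to reduce the statement to a triviality by passing to the dual matroid, where contraction turns into deletion and cocircuits turn into circuits. Concretely, I would invoke two standard facts about matroid duality (see \cite{Oxley}): first, the cocircuits of a matroid are exactly the circuits of its dual; and second, contraction is dual to deletion, i.e.\ $(M/e)^*=M^*\setminus e$. Combining these immediately identifies the cocircuits of $M/e$ with the circuits of $M^*\setminus e$.

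The next step is to observe that deletion can only destroy circuits, never create new ones: the circuits of $M^*\setminus e$ are precisely those circuits of $M^*$ contained in $E\setminus\{e\}$. This is because the independent sets of $M^*\setminus e$ are exactly the independent sets of $M^*$ avoiding $e$, so their minimal dependent subsets coincide. Translating this back through duality, the conclusion is that every cocircuit of $M/e$ is also a cocircuit of $M$ (namely, one that happens not to contain $e$).

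From here the inequality is essentially automatic. Since each cocircuit of $M/e$ is in particular a cocircuit of $M$, the largest cocircuit of $M/e$ cannot exceed the largest cocircuit of $M$, giving $c^*(M/e)\le c^*(M)$ without any further cardinality bookkeeping.

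I do not expect a genuine obstacle here: the entire difficulty is concentrated in correctly stating and applying the two duality facts, after which the result drops out in one line. An alternative, fully self-contained route would be to mirror the contradiction argument of Lemma~\ref{lemma_bond_decrease} verbatim, replacing ``cut set'' by ``set meeting every basis'' and ``bond'' by ``cocircuit'' and re-deriving the needed circuit-elimination properties by hand; but the dual argument above is cleaner and sidesteps that bookkeeping entirely, which is why I would present it as the main proof.
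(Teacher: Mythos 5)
Your proof is correct, but it takes a genuinely different route from the paper's. The paper mirrors the contradiction argument of Lemma~\ref{lemma_bond_decrease} directly: it assumes a cocircuit $C$ of $M/e$ with $|C|>c^*(M)$, notes that $C$ cannot then be a cocircuit of $M$, and splits into two cases using the characterization of cocircuits as minimal sets meeting every basis (either $C$ fails minimality in $M$, which transfers back to $M/e$, or $C$ misses some basis $B$ of $M$, whence it misses a basis of $M/e$ as well). Your duality argument --- cocircuits of $M/e$ are circuits of $(M/e)^*=M^*\setminus e$, which are exactly the circuits of $M^*$ avoiding $e$, i.e.\ the cocircuits of $M$ avoiding $e$ --- sidesteps the case analysis entirely and in fact proves the strictly stronger statement that the cocircuits of $M/e$ form a subset of the cocircuits of $M$, from which the cardinality bound is immediate. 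What the paper's version buys is self-containedness: it only uses the transversal characterization of cocircuits already quoted in the appendix, whereas yours additionally imports the identity $(M/e)^*=M^*\setminus e$ from \cite{Oxley}. Both are sound; yours is arguably the cleaner and more informative proof, and it specializes correctly to the graph case of Lemma~\ref{lemma_bond_decrease} (bonds of $G/e$ are the bonds of $G$ avoiding $e$).
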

\begin{proof}
    Assume there exists $e\in E$ and a cocircuit $C$ of $M/e$ that falsifies the claim. Then  $C$ is not a cocircuit in $M$, as otherwise  $|C|>c^*(M)$. That implies either that $C$ is not minimal while having a nonempty intersection with every basis of $M$, but then $C$ is not minimal with this property in $M/e$ either, because $e\not\in C$. Or it implies that
    there exists a basis $B$ of $M$ with $C\cap B=\emptyset$, but then  $C\cap (B\setminus\{e\})=\emptyset$, too, and $C$ is not a cocircuit of $M/e$. 
\qed\end{proof}

\begin{lemma}[Generalization of \protect{Lemma~\ref{lemma_edge_bound}}, see also \protect{\cite[Ex.\ 8 on p.133]{Oxley}}]\label{lem:matroid_edge_bound}
   For loopless matroid $M=(E,\ind)$, $|E| \le r(M)\cdot c^*(M)$, where $c^*(M)$ is the maximum size of a cocircuit of $M$.
\end{lemma}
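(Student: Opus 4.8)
The plan is to mirror the proof of Lemma~\ref{lemma_edge_bound}. There, Prim's algorithm decomposed the edge set into $|T|=r$ bonds, each of size at most $|B(G)|$, whence $|E|\le r\cdot|B(G)|$. The matroid analogue replaces this bond decomposition by a covering of $E$ with $r(M)$ cocircuits, each of size at most $c^*(M)$; the same counting then gives the bound. I would realize the covering through \emph{fundamental cocircuits} with respect to a fixed basis.

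Concretely, first I would fix a basis $B=\{x_1,\dots,x_r\}$ of $M$ with $r=r(M)$. Recall that $E\setminus B$ is a basis of the dual matroid $M^*$, and that for each $x_i\in B$ the set $(E\setminus B)\cup\{x_i\}$ contains a unique circuit of $M^*$, that is, a unique cocircuit of $M$; call it the fundamental cocircuit $C^*(x_i,B)$. By construction $x_i\in C^*(x_i,B)$, and $|C^*(x_i,B)|\le c^*(M)$ since $c^*(M)$ is the largest cocircuit size.

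The key step is to show that these $r$ cocircuits cover $E$, i.e.\ $E=\bigcup_{i=1}^{r}C^*(x_i,B)$. Every basis element is covered, since $x_i\in C^*(x_i,B)$. For a non-basis element $e\in E\setminus B$, I would consider its fundamental circuit $C(e,B)$, the unique circuit contained in $B\cup\{e\}$. Because $M$ is loopless, $\{e\}$ is not a circuit, so $C(e,B)$ properly contains $e$ and hence meets $B$ in some element $x_i$. The standard orthogonality between fundamental circuits and cocircuits, $x_i\in C(e,B)\iff e\in C^*(x_i,B)$ (see~\cite{Oxley}), then yields $e\in C^*(x_i,B)$. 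Thus $E$ is covered, and by the union bound $|E|\le\sum_{i=1}^{r}|C^*(x_i,B)|\le r(M)\cdot c^*(M)$.

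I expect the only delicate point to be this covering step, which hinges on two facts: that looplessness forces every fundamental circuit $C(e,B)$ to meet $B$ (guaranteeing an $x_i$ to which $e$ can be charged), and the orthogonality relation $x_i\in C(e,B)\iff e\in C^*(x_i,B)$. Both are standard in matroid theory, so once they are invoked the remainder is exactly the union-bound counting used in Lemma~\ref{lemma_edge_bound}, with cocircuits playing the role of bonds and $r(M)$ the role of $|T|$.
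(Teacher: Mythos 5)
Your proof is correct and essentially the same as the paper's: the fundamental cocircuit $C^*(x_i,B)$ you use is exactly the set $E-cl(B-\{x_i\})$ that the paper takes as the complement of the hyperplane spanned by $B-\{x_i\}$, and both arguments cover $E$ by these $r(M)$ cocircuits using looplessness to ensure each fundamental circuit meets the basis. The only cosmetic difference is that you justify the covering via the standard circuit--cocircuit orthogonality relation, while the paper argues it directly from the closure operator.
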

\begin{proof}
    Let $B = \{b_1, b_2,\dots , b_r\}$ be a basis of $M$. For each $1 \le i \le r$, let $D_i := E(M)-cl(B-{b_i})$, which is the cocircuit of $M$ obtained by taking the complement of the hyperplane spanned by $B-\{b_i\}$. Note that $b_i\in D_i$. Then we claim that $E=\cup_{i=1}^r D_i$, which proves the claim, as each cocircuit $D_i$ has size at most $c^*(M)$. To see the claim, let $e\in E$, and let $C(e)$ be the unique circuit contained in $B\cup\{e\}$. As $M$ is loopless, $\{e\}\varsubsetneq C(e)$.  Then $e\in D_i$ for each $i$ with $b_i\in C(e)$, because $e\not\in D_i$ implies $e\in cl(B-b_i)$, contradicting the fact that $e$ and $b_i$ are both contained in $C(e)$.\qed 
\end{proof}

\begin{lemma}[Generalization of \protect{Lemma~\ref{lemma_circuit_same_weight}}]
    If $B_1$ and $B_2$ are two different minimum weight bases of matroid $M$, and $e\not\in B_1,B_2$, then the maximum weight elements in the unique circuits induced by $e$, say $f_1\in C(B_1\cup\{e\})$ and $f_2\in C(B_2\cup\{e\})$, have the same weight $w(f_1)=w(f_2)$.
\end{lemma}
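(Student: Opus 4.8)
The plan is to mirror the proof of Lemma~\ref{lemma_circuit_same_weight} almost verbatim, replacing the ``cycle property'' of minimum spanning trees by its matroid analogue. The one structural fact I rely on is the local-optimality characterization of minimum-weight bases (the correctness of Edmonds' greedy algorithm, already invoked for \sam): a basis $B$ of a weighted matroid is of minimum weight if and only if, for every $g\notin B$, the element $g$ is a maximum-weight element of its fundamental circuit $C(B\cup\{g\})$; equivalently, no single exchange $B-f+g$ with $f\in C(B\cup\{g\})$ strictly decreases the weight. Both directions of this equivalence will be used. Under this reading, since $B_1$ and $B_2$ are minimum-weight, $e$ is already a maximum-weight element of each of $C(B_1\cup\{e\})$ and $C(B_2\cup\{e\})$, so the content of the statement concerns the heaviest element \emph{other than} $e$; accordingly I take $f_i$ to be a maximum-weight element of $C(B_i\cup\{e\})\setminus\{e\}$, matching the ``heaviest edge on the $\{u,v\}$-path'' of Lemma~\ref{lemma_circuit_same_weight}.

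Assume without loss of generality that $w(f_1)\le w(f_2)$, and introduce perturbed weights $\overline{w}$ with $\overline{w}(g)=w(g)$ for all $g\neq e$ and $\overline{w}(e)=\min\{w(f_1),w(f_2)\}=w(f_1)$. The first step is to argue that $B_1$ is still a minimum-weight basis under $\overline{w}$. Since $e\notin B_1$, for every $g\notin B_1$ with $g\neq e$ the fundamental circuit $C(B_1\cup\{g\})\subseteq B_1\cup\{g\}$ avoids $e$, so $\overline{w}$ agrees with $w$ on it and the local-optimality condition at $g$ is inherited from the $w$-minimality of $B_1$. The only affected circuit is $C(B_1\cup\{e\})$, and there $\overline{w}(e)=w(f_1)=\max_{f\in C(B_1\cup\{e\})\setminus\{e\}}\overline{w}(f)$ by the choice of $f_1$, so $e$ remains a maximum-weight element. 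Hence $B_1$ is locally optimal, and therefore minimum-weight, under $\overline{w}$.

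The second step transfers optimality to $B_2$ and reads off the inequality. Because $e$ lies in neither $B_1$ nor $B_2$, perturbing only $\overline{w}(e)$ leaves basis weights untouched, so $\overline{w}(B_2)=w(B_2)=w(B_1)=\overline{w}(B_1)$, where the middle equality is that $B_1$ and $B_2$ are both $w$-minimum. Thus $B_2$ attains the $\overline{w}$-minimum as well and is in particular locally optimal under $\overline{w}$. Applying the characterization to $B_2$ at the element $e$ yields $\overline{w}(e)\ge \overline{w}(f_2)=w(f_2)$, and since $\overline{w}(e)=w(f_1)$ this gives $w(f_1)\ge w(f_2)$. Together with the assumption $w(f_1)\le w(f_2)$ this forces $w(f_1)=w(f_2)$, as claimed.

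I expect the only genuine obstacle to be stating and justifying the matroid local-optimality characterization cleanly, in particular that both implications hold and that $f\in C(B\cup\{g\})$ is exactly the condition under which $B-f+g$ is a basis. Once that is in hand, every remaining step is a direct translation of the graph argument, the sole point requiring care being the observation that lowering $\overline{w}(e)$ disturbs only the single fundamental circuit $C(B_1\cup\{e\})$ of $B_1$.
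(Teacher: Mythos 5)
Your proposal is correct and follows essentially the same route as the paper: the same perturbation $\overline{w}(e)=\min\{w(f_1),w(f_2)\}$, the same transfer of $\overline{w}$-optimality from $B_1$ to $B_2$ via $\overline{w}(B_1)=\overline{w}(B_2)$, and the same exchange argument at $e$ to conclude $\overline{w}(e)\ge\overline{w}(f_2)$. You are merely more explicit than the paper about the underlying local-optimality characterization of minimum-weight bases and about reading $f_i$ as the heaviest element of $C(B_i\cup\{e\})\setminus\{e\}$, both of which are consistent with the paper's intent.
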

\begin{proof}
    Assume w.l.o.g.\ that $w(f_1)\le w(f_2)$. We show 
    $w(f_1)\ge w(f_2)$.
    Just like in the proof of Lemma~\ref{lemma_circuit_same_weight}, define weight function $\overline{w}$ that is identical to $w$ but has
    $\overline{w}(e)=\min\{w(f_1),w(f_2)\}$, so that $\overline{w}(e)=w(f_1)$. Then $B_1$ is still a $\overline{w}$-minimal weight basis, because by definition of $\overline{w}$, $e$ has $\overline{w}$-maximal weight in the circuit $C(B_1\cup\{e\})$. Then also $B_2$ must be a $\overline{w}$-minimal basis, since $\overline{w}(B_1)=\overline{w}(B_2)$. This again implies $\overline{w}(e)\ge \overline{w}(f_2)$, because otherwise $\overline{w}((B_2\setminus\{f_2\})\cup\{e\})<\overline{w}(B_2)$. Hence $w(f_1)=\overline{w}(f_1)=\overline{w}(e)\ge \overline{w}(f_2)=w(f_2)$.
\qed\end{proof}
\end{document}